\begin{document}
\definecolor{ForestGreen}{rgb}{0.0, 0.27, 0.13}
\newcommand{\R}{\mathbb{R}  }
\newcommand{\C}{\mathbb{C}}
\newcommand{\I}{\mathbb{I}}
\newcommand{\gn}{\bar{g}_{\textnormal{N}}}
\newcommand{\gi}{\bar{g}_{\textnormal{I}}}
\newcommand{\gf}{\bar{g}_{\textnormal{F}}}
\newcommand{\phitot}{\phi_{\textnormal{tot}}}
\newcommand{\pmax}{P_{\textnormal{max}}}
\newcommand{\pn}{\bar{P}_{\textnormal{N}}}
\newcommand{\dpn}{\overline{\Delta P}_{\textnormal{N}}}
\newcommand{\pno}{\bar{P}_{0,\textnormal{N}}}
\newcommand{\lv}{\left\lvert}
\newcommand{\rv}{\right\rvert}
\newcommand{\llv}{\left\lVert}
\newcommand{\rrv}{\right\rVert}

\newtheorem{definition}{\textbf{Definition}}
\newtheorem{theorem}{\textbf{Theorem}}
\newtheorem{corollary}{\textbf{Corollary}}
\newtheorem{proposition}{\textbf{Proposition}}
\newtheorem{problem}{\textbf{Problem}}
\newtheorem{lemma}{\textbf{Lemma}}
\newtheorem{remark}{\textbf{Remark}}
\newenvironment{proof}{\textit{Proof}:}{\hfill$\square$}

\begin{frontmatter}

\title{Voltage Collapse Stabilization in Star DC Networks\thanksref{footnoteinfo}} 

\thanks[footnoteinfo]{An older version of this paper was presented at the 2019 American Control Conference.}

\author[jhu-case]{Charalampos Avraam}\ead{cavraam1@jhu.edu},    
\author[jhu-ece]{Enrique Mallada}\ead{mallada@jhu.edu}               

\address[jhu-case]{Department
of Civil and Systems Engineering, Johns Hopkins University}  
\address[jhu-ece]{Department
	of Electrical and Computer Engineering, Johns Hopkins University}                                                                 

\begin{keyword}                           
voltage collapse; asymptotic stabilization; flexible loads; direct current; star networks.               
\end{keyword}                             

\begin{abstract}                          

Voltage collapse is a type of blackout-inducing dynamic instability that occurs when power demand exceeds the maximum power that can be transferred through a network. The traditional (preventive) approach to avoid voltage collapse is based on ensuring that the network never reaches its maximum capacity. However, such an approach leads to inefficient use of  network resources and does not account for unforeseen events.
To overcome this limitation, this paper seeks to initiate the study of \textit{voltage collapse stabilization}, {i.e.,} the design of load controllers aimed at stabilizing the point of voltage collapse.
%
We formulate the problem of voltage stability for a star direct current network as a dynamic problem where each load seeks to achieve a constant power consumption by updating its conductance as the voltage changes. We introduce a voltage collapse stabilization controller and show that 
the high-voltage equilibrium is stabilized. More importantly, we are able to achieve proportional load shedding under extreme loading conditions. We further highlight the key features of our  controller using numerical illustrations. 

\end{abstract}

\end{frontmatter}

\section{Introduction}\label{sec:intro}

Loads with constant power consumption tend to dynamically reduce their effective impedance as a means to compensate for a reduction on the power supplied by the network \cite{vvc1998}. However, when the network power transfer capacity is met, further reduction on the effective impedance results in a greater gap between the power supplied to the load and the power demanded by it. This continuous update on the effective impedance  results in an abrupt voltage drop that leads to Voltage Collapse (VC)~\cite{kundur1994,cigre2004}. From a dynamical systems perspective, VC is the manifestation of a saddle node bifurcation where a stable and an unstable equilibrium coalesce, and disappear. As a result, VC is by definition  a dynamic phenomenon that naturally depends on the load dynamics. 
%

Despite its dynamic nature, 
voltage stability studies 
have traditionally focused on static analyses, based on load flow equations \cite{dobson1992},  ensuring enough generation and transmission capacity to avoid VC \cite{iris1997}.
The vast majority of work focuses on the quantification of generation- and network-side voltage stability margins. Demand-side management tools are limited~\cite{Strbac1998}. Operating within the stability margins ensures the system never reaches its limits. Classical system stability metrics include \cite{obadina1988,tvc1991,dobson1994,bompard1996}, and more recently \cite{vournas2016,simpson-porco2016}. However, existing stability margins implicitly enforce a  trade-off between reliability against VC and efficient use of resources~\cite{Strbac2008}.

Fortunately, the rapid development of power electronics and information technology \cite{bayoumi2015} has the potential to provide enough demand-side controllability that could allow us to consider more alternatives for preventing VC. Here, we introduce one such alternative with the study of \textit{voltage collapse stabilization}. Specifically, we aim to investigate how to use (flexible) demand response to reduce consumption and match network capacity, when total demand exceeds it. In this way we prevent inflexible demand from driving the system to VC. To the best of our knowledge, this work is the first effort to design dynamic, demand-side controllers aimed at preventing VC. Such a control scheme needs to overcome several challenges that arise from the dynamic nature of VC.
 
First, the controller needs to stabilize an operating point that is unstable under normal operating conditions.~\footnote{At a saddle node bifurcation a stable and an unstable equilibrium coalesce, which leads to an unstable equilibrium \cite{khalil2002nonlinear}.} 
Further, the stabilization procedure should be robust to the presence of conventional loads that are not willing to reduce their consumption -- referred here as inflexible loads.
%
Thus, in the presence of both flexible and inflexible loads in a system, we ask the following questions:
\begin{itemize}
\item Is voltage collapse stabilization feasible? 
\item {Can a stabilizing controller distribute efficiently the necessary demand reduction among flexible loads?}
\end{itemize}

In this work, we provide an initial answer to these questions for a direct current (DC) star network. 
We consider a resistive star network where each load seeks to consume a constant amount of power by dynamically updating its conductance using a first order voltage droop. %
Despite its simplicity, this model captures the
fundamental drivers of VC\footnote{The model and our results also extend to a fully reactive alternating current star networks.}.

Indeed, when all loads are inflexible (nominal conditions), we show that if the total demand is smaller than the network capacity, the system has a stable and an unstable equilibrium. Further, when demand exceeds capacity,  we analytically show that the voltage $v(t)\rightarrow0$ as $t\rightarrow\infty$, {i.e.,} the voltage collapses.
Once the setup has been analytically validated, we introduce the voltage collapse stabilization controller that stabilizes voltage collapse, even in the presence of inflexible loads, as long as flexible demand is not depleted. Further, we show that stabilization can be achieved with a proportionally fair allocation of load shedding.

The rest of the paper is organized as follows. Section \ref{sec:preliminaries} introduces our DC network model of constant power loads.
Section \ref{sec:analysis-infl-loads} investigates the properties of a system that comprises only inflexible loads and characterizes the region of stable equilibria. Section \ref{sec:eqlbrm-analysis} describes our voltage collapse stabilization controller and characterizes the equilibria of the new system. Section \ref{sec:stab-analysis} studies the stability of each equilibrium point at different operating conditions. 
We illustrate several features of our controller using numerical simulations in Section \ref{sec:results} and conclude in Section \ref{sec:conclusions}.

\section{Problem Setup}\label{sec:preliminaries}
We consider the star DC network model shown in Figure \ref{fig:nbs},
where $E$ denotes the source voltage, $g_l$ the conductance of a transmission line that transfers power to $n$ loads, and $g_i$ denotes the $i$th load conductance, $i\in N:=\{1,\dots,n\}$. We consider two sets of loads, with $F=\{1,...,n_F\}$ being the set of \textit{flexible} loads ($n_{\text{F}}=|F|$), and $I=\{n_F+1,...,n\}$ being the set of \textit{inflexible} loads ($n_I=|I|$). The set of all loads is $N=I\cup F=\{1,\dots, n\}$.
For any subset $S\subseteq N$, we denote the vector of conductances of loads in $S$ by $g_S\in\mathbb R_{\geq0}^{\lvert S \rvert}$ and
\begin{equation}\label{eq:gs}
	\bar{g}_{\text{S}}(g)=\sum_{i\in S} g_i.
\end{equation}
Throughout the paper it is important to keep track of the dependence of several quantities, e.g., voltages, powers, and their derivatives. This explicit dependence $(
\cdot)$ (e.g., $\bar g_\text{S}(g)$) is highlighted when quantities are introduced, but dropped  later on ({e.g.,} $\bar g_\text{S}$) to improve readability of the formulae. 

\begin{figure}
    \begin{center}
        \includegraphics[width=1.0\columnwidth]{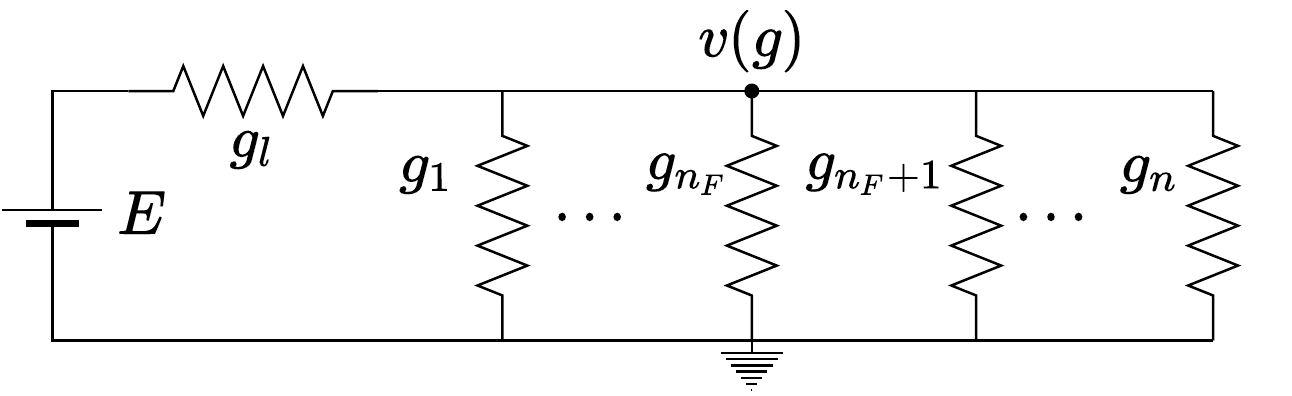}      
    \caption{Star DC network with $n$ dynamic loads.}                    
    \label{fig:nbs}                                                         
    \end{center}                                                            
\end{figure}
Using this notation, we can use Kirchoff's Voltage and Current Laws (KVL and KCL) to compute the voltage applied to each load
\begin{equation}\label{eq:vg}
	v\left(\gn\right)=\frac{E\cdot g_l}{\gn+g_l}.
\end{equation}
where, from \eqref{eq:gs}, $\gn = \sum_{i\in N} g_i$.

Then, the total power consumed by each load $i\in N$ becomes
\begin{equation}\label{eq:Pi}
	P_i(g)=v^2g_i
	=\left(\frac{E\cdot g_l}{\gn+g_l}\right)^2g_i.
\end{equation}
The difference between the power consumed by each load $i\in N$ and its nominal demand $P_{0,i}\geq 0$ is
\begin{equation}\label{eq:DPi}
	\Delta P_i(g) = P_i - P_{0,i}.
\end{equation}
The total power consumed by all loads in the system is
\begin{equation}\label{eq:Pn}
	\pn (\gn)=\sum_{i=1}^nP _i
	= \frac{(E\cdot g_l)^2}{\left(\gn+g_l\right)^2}\gn.
\end{equation}
Similarly, total demand is defined as
\begin{equation}\label{eq:pno}
	\pno = \sum_{i=1}^n P_{0,i},
\end{equation} 
and the difference between total power supply and power demand is 
\begin{equation}\label{eq:dptot}
\dpn \left(\gn\right) = \sum_{i=1}^n \left( P_i - P_{0,i} \right) = 
\pn - \pno,
\end{equation}
where the dependence on $\gn$ follows from \eqref{eq:Pn}.

\noindent
{\textbf{Network Capacity} ($\pmax$):}~Voltage collapse in a DC network can result from 
the network reaching its maximum capacity~\cite{vvc1998}. Therefore, it is of interest to 
compute the maximum value of $\pn(\gn)$ in \eqref{eq:Pn}. A 
straightforward calculation leads to
\begin{equation}\label{eq:dPtgt}
	\frac{\partial}{\partial 
	\gn}\pn\left(\gn\right)=\frac{(Eg_l)^2}{(\gn+g_l)^3}\left( g_l-\gn \right).
\end{equation}

From \eqref{eq:dPtgt}, we can see that $\pn\left(\gn\right)$ is an 
increasing function of $\gn$ whenever $\gn<g_l$, and decreasing when $\gn>g_l$.
Therefore, when $\gn:=g_l$, the maximum power that can be supplied through the line is
\begin{align}\label{eq:Pmax}
	\pmax=\frac{E^2g_l}{4}.
\end{align}

{\bf Dynamic Load Model:}~We assume that each load $i\in N$ has a constant power demand $P_{0,i}$. For an \textit{inflexible load} $i\in I$, this demand $P_{0,i}$ must always be (approximately) satisfied. This is achieved by dynamically changing the conductance $g_i$ in order to change the power supply $P_i(g)$. Following \cite{vournas2008}, we use the following dynamic model
\begin{align}\label{eq:infl}
	\dot{g}_i =&  - ( v^2g_i - P_{0,i} ) = - \Delta P_i, & i\in I.
\end{align}
Notice that $\mathbb{R}^n_{\geq0}$ is invariant, since whenever $g_i=0$ then \eqref{eq:infl} implies that $\dot{g}_i>0$.

For the \textit{flexible loads}, we assume that they are willing to consume less than $P_{0,i}$ whenever $\pno:=\sum_{i\in N}P_{0,i}>P_{\text{max}}$. Thus, our goal is to design a control law
\begin{equation}\label{eq:fl}
	\dot{g}_i = u_i, \quad 	 i\in F,
\end{equation}
where the input $u_i$ is such that in equilibrium
$\Delta P_i(g)=0$ whenever $\pno<P_{\text{max}}$.

{\bf Power Flow Solutions:} Given an equilibrium $g^*$ of \eqref{eq:infl}-\eqref{eq:fl}, there exists a unique voltage $v^*$ and power consumption $P(g^*)=(P_i(g^*)$, $i\in N)$. The pair $(v,P)$ is referred as \textit{power flow solution}. Thus, given the one-to-one relationship between $g$ and the pair $(v,P)$, we refer to $g^*$ as a power flow solution. 
\section{System Analysis with Inflexible Loads}\label{sec:analysis-infl-loads}

Throughout this section we validate our inflexible load model by showing that, when $N=I$ (all loads are inflexible), if the demand $\pno$ exceeds the maximum deliverable power $P_{\text{max}}$ ($\pno>P_{\text{max}}$), the system undergoes VC.
We further characterize the region of stable equilibria of \eqref{eq:infl} when $\pno<P_{\text{max}}$ and motivate the need for coordination to prevent voltage collapse.

\subsection{Voltage Collapse}
 We first show that \eqref{eq:infl} undergoes VC in the overload regime ($\pno>P_{\text{max}}$). To this aim, we provide a formal definition to VC.

 \begin{definition}[Voltage Collapse]
	The system \eqref{eq:infl} undergoes voltage collapse whenever $v(g(t))\rightarrow 0$  as $t\rightarrow+\infty$.
\end{definition}

We can show that voltage collapse occurs from the response of loads to overloading.
\begin{theorem}[Voltage Collapse]\label{thm:vc}
	The dynamic load model \eqref{eq:infl} with $I=N$ undergoes voltage collapse whenever $\varepsilon:=\pno-P_{\text{max}}>0$.
\end{theorem}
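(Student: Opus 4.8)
The plan is to show that the aggregate conductance $\gn(t)$ diverges to $+\infty$, which by \eqref{eq:vg} immediately forces $v(g(t))\to 0$. The natural Lyapunov-type object is $\gn$ itself: summing the inflexible dynamics \eqref{eq:infl} over $i\in N=I$ gives
\begin{align}
\dot{\gn} = \sum_{i\in N}\dot g_i = -\sum_{i\in N}\Delta P_i = -\dpn(\gn) = \pno - \pn(\gn).
\end{align}
So the scalar quantity $\gn$ obeys a closed one-dimensional ODE $\dot{\gn} = \pno - \pn(\gn)$, and everything reduces to analyzing this scalar flow. By \eqref{eq:Pn} and the discussion around \eqref{eq:dPtgt}--\eqref{eq:Pmax}, $\pn(\gn)$ is continuous with $\pn(0)=0$, rises to its maximum $\pmax$ at $\gn=g_l$, and decays back toward $0$ as $\gn\to\infty$. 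Hence under the overload hypothesis $\pno = \pmax + \varepsilon > \pmax \ge \pn(\gn)$ for every $\gn\ge 0$, so $\dot{\gn} = \pno - \pn(\gn) \ge \varepsilon > 0$ uniformly along any trajectory in the invariant region $\mathbb{R}^n_{\ge0}$.

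From here the argument is routine: since $\dot{\gn}\ge\varepsilon>0$ for all $t$, we get $\gn(t)\ge \gn(0)+\varepsilon t\to+\infty$. Then from \eqref{eq:vg},
\begin{align}
v(g(t)) = \frac{E\,g_l}{\gn(t)+g_l} \le \frac{E\,g_l}{\gn(0)+g_l+\varepsilon t} \xrightarrow[t\to\infty]{} 0,
\end{align}
which is precisely voltage collapse in the sense of the definition. One should also note that the trajectory is defined for all $t\ge 0$: on any finite interval $\gn$ grows at most linearly (indeed $\dot\gn = \pno-\pn(\gn)\le\pno$), so there is no finite-escape-time obstruction, and $\mathbb{R}^n_{\ge0}$ being invariant keeps each $g_i\ge0$ so the $v^2 g_i$ terms stay well-defined.

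The only mild subtlety — and the one place the proof needs a sentence rather than a formula — is the reduction to the scalar variable: although the individual $g_i(t)$ are not determined by $\gn(t)$ alone, the map $g\mapsto\gn$ is a change of variables under which the dynamics close, because $v$ depends on $g$ only through $\gn$ and each $\dot g_i$ depends only on $(v,g_i,P_{0,i})$; summing kills the per-load detail. I expect the main (very small) obstacle is simply making this observation cleanly and confirming that the uniform bound $\pn(\gn)\le\pmax$ on $\mathbb{R}_{\ge0}$ — which is exactly what \eqref{eq:dPtgt}--\eqref{eq:Pmax} establish — is invoked correctly; the rest is a one-dimensional monotonicity estimate.
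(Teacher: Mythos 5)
Your proof is correct and follows essentially the same route as the paper: the quantity $\gn$ you track is exactly the paper's Lyapunov-type function $V(g)=\sum_{i\in N}g_i$, and both arguments rest on the bound $\dot{\gn}=\pno-\pn(\gn)\geq\varepsilon>0$ (via $\pn\leq\pmax$) to force $\gn(t)\to\infty$ and hence $v\to0$ by \eqref{eq:vg}. Your additional observation that the dynamics close into a scalar ODE in $\gn$ is a pleasant bonus but not needed; the paper only uses the lower bound on the derivative, together with the same global-existence remark you make.
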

\begin{proof}
	We have already pointed out that $\mathbb{R}^n_{\geq0}$ is invariant. Also, it is easy to check that \eqref{eq:infl} is globally Lipschitz on $\mathbb{R}^n_{\geq0}$ since $g_l>0$. Thus, by \cite[Theorem 3.2]{khalil2002nonlinear} there is a unique solution to \eqref{eq:infl}, $g(t)$, that is defined $\forall t\geq0$. Now, consider the function $V(g) = \sum_{i\in N}g_i$, and let $S^+_V(a)=\{g\in\mathbb{R}^n_{\geq0}: V(g)\leq a\}$.
	By taking the time derivative of $V$ we get
	\begin{align*}
		\dot V(g) &= \sum_{i=1}^n \dot g_i = -\sum_{i=1}^n P_i(g)-P_{0,i} \\
		&\geq \pno-P_{\text{max}} =\varepsilon>0.
	\end{align*}
	Therefore, $\forall a\geq 0, $ if $g(0)\in S_V^+(a)$, the solution $g(t)$ escapes $S_V^+(a)$ in finite time. It follows then that $\llv g(t)\rrv\rightarrow \infty$ as $t\rightarrow \infty$, i.e.,  $\gn(t)$ grows unbounded. Therefore, by \eqref{eq:vg} $v(\gn(t))\rightarrow 0$ and the system's voltage collapses.
\end{proof}
\subsection{Equilibrium Analysis with Inflexible Loads}
Since \eqref{eq:infl} undergoes VC when $\pno>\pmax$, in this section we will study the properties of the equilibria of \eqref{eq:infl} when $\pno<\pmax$. The following lemma will allow us to characterize these equilibria.
\begin{lemma}[Intermediate Value Theorem \cite{rudin1976}]\label{lem:ivt}
	Let $f:\R\rightarrow [a,b]\subset\R$, continuous function. For any $\psi\in\left(f(a),f(b)\right)$ there exists $\xi\in[a,b]$ such that $f(\xi)=\psi$.
\end{lemma}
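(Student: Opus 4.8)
The plan is to invoke the completeness (least upper bound property) of $\R$, which is the essential ingredient that distinguishes this statement from its false analogue over $\mathbb{Q}$; reading the statement in the standard way, with $f$ continuous on an interval containing $a$ and $b$. First I would reduce to a single case: assuming $f(a)<f(b)$ (the reverse follows by applying the argument to $-f$), fix $\psi\in\left(f(a),f(b)\right)$ and define
\begin{equation*}
S=\{x\in[a,b]:f(x)<\psi\}.
\end{equation*}
This set is nonempty, since $a\in S$, and bounded above by $b$, so $\xi:=\sup S$ exists and satisfies $\xi\in[a,b]$.

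Next I would show $f(\xi)=\psi$ by ruling out both strict inequalities via the $\varepsilon$--$\delta$ characterization of continuity at $\xi$. If $f(\xi)<\psi$, continuity yields a $\delta>0$ with $f(x)<\psi$ for all $x\in[a,b]$ satisfying $\lv x-\xi\rv<\delta$; since necessarily $\xi<b$ (because $f(b)>\psi$), this produces elements of $S$ strictly larger than $\xi$, contradicting $\xi=\sup S$. Symmetrically, if $f(\xi)>\psi$, continuity yields a $\delta>0$ on which $f>\psi$, and since then $\xi>a$, the number $\xi-\delta/2$ is an upper bound for $S$ strictly smaller than $\xi$, again a contradiction. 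Hence $f(\xi)=\psi$, proving the claim.

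An alternative I would keep in reserve is the bisection / nested-interval argument: set $[a_0,b_0]=[a,b]$ and, given $[a_k,b_k]$ with $f(a_k)\le\psi\le f(b_k)$, evaluate $f$ at the midpoint and discard whichever half preserves the inequality, so that $b_k-a_k=(b-a)2^{-k}\to 0$; by the nested interval theorem $\bigcap_k[a_k,b_k]=\{\xi\}$, and passing to the limit in $f(a_k)\le\psi\le f(b_k)$ using continuity gives $f(\xi)=\psi$. Both routes are short. The only real subtlety — and the main obstacle — is careful handling of the boundary cases (one-sided neighbourhoods when $\xi$ coincides with $a$ or $b$, which is exactly why the observations $\xi<b$ and $\xi>a$ are inserted above) and making explicit that completeness of $\R$ is what is being used. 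Since this is a standard textbook fact, in the paper itself I would simply cite \cite{rudin1976} rather than reproduce the argument.
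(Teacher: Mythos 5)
Your argument is correct: the supremum construction with the two $\varepsilon$--$\delta$ contradictions is the standard complete proof of the intermediate value theorem, and it is essentially the argument in the cited reference \cite{rudin1976}; the paper itself offers no proof, exactly as you anticipate, treating the lemma as a quoted textbook fact (note only that the paper's phrasing $f:\R\rightarrow[a,b]$ is a typo for $f:[a,b]\rightarrow\R$, which you have correctly read in the intended way). Nothing further is needed.
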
 
%
\begin{theorem}[Equilibrium Characterization of \eqref{eq:infl}]\label{thm:char-infl}
    Let $I=N$. Then, the equilibria of system \eqref{eq:infl-cond} have the following properties
    
    \begin{enumerate}[label=\alph*)]
        \item When $0<\pno<\pmax$, the system \eqref{eq:infl-cond} has two equilibria $g_1^*,g_2^*\in\R^n$ such that
    	\begin{equation}\label{eq:infl-cond}
    		\gn(g_1^*)< g_l<\gn(g_2^*).
    	\end{equation}
        \item When $\pno=0$, the system \eqref{eq:infl-cond} has one equilibrium $g_1^*=0$.
        \item When $\pno=\pmax$, the system \eqref{eq:infl-cond} has one equilibrium $g^*$ such that $\gn (g^*)=g_l.$
    \end{enumerate}
	 	
\end{theorem}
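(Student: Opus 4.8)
The plan is to reduce the equilibrium conditions of \eqref{eq:infl} (with $I=N$) to a single scalar equation in the aggregate conductance $\gn$, and then to read off the number of solutions from the shape of $\pn(\gn)$ established in \eqref{eq:dPtgt}--\eqref{eq:Pmax} together with Lemma~\ref{lem:ivt}. Concretely, $g^*\in\R_{\geq0}^n$ is an equilibrium if and only if $v(\gn(g^*))^2 g_i^* = P_{0,i}$ for every $i\in N$; summing these $n$ identities over $i$ yields the necessary condition $\pn(\gn(g^*))=\pno$. Conversely, I would show that every root $s\ge0$ of the scalar equation $\pn(s)=\pno$ produces exactly one equilibrium: since $v(s)=Eg_l/(s+g_l)>0$ from \eqref{eq:vg}, the only candidate is $g_i^*:=P_{0,i}/v(s)^2\ge0$, and this candidate is consistent because $v(s)^2 s=\pn(s)=\pno=\sum_{i\in N}P_{0,i}$ forces $\sum_{i\in N}g_i^*=\pno/v(s)^2=s$. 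Hence equilibria of \eqref{eq:infl} are in one-to-one correspondence with the roots of $\pn(\cdot)=\pno$ on $\R_{\ge0}$, and the theorem reduces to counting those roots in each regime.

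For part~(a), when $0<\pno<\pmax$, I would use \eqref{eq:dPtgt} to recall that $\pn$ is continuous, strictly increasing on $[0,g_l]$ and strictly decreasing on $[g_l,\infty)$, with $\pn(0)=0$, $\pn(g_l)=\pmax$, and $\pn(s)\to0$ as $s\to\infty$. Applying Lemma~\ref{lem:ivt} on $[0,g_l]$ with $\psi=\pno\in(0,\pmax)$ yields a root $s_1\in(0,g_l)$, unique on $[0,g_l]$ by strict monotonicity. Picking $b>g_l$ large enough that $\pn(b)<\pno$ (possible since $\pn(s)\to0$) and applying Lemma~\ref{lem:ivt} on $[g_l,b]$ yields a root $s_2\in(g_l,b)$, unique on $[g_l,\infty)$ by strict monotonicity; and $s=g_l$ is not a root since $\pn(g_l)=\pmax\ne\pno$. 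Therefore there are exactly two roots $s_1<g_l<s_2$, and through the correspondence of the first paragraph these give the two equilibria $g_1^*,g_2^*$ with $\gn(g_1^*)=s_1<g_l<s_2=\gn(g_2^*)$, i.e.\ \eqref{eq:infl-cond}.

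Parts~(b) and~(c) then follow from the same correspondence. When $\pno=0$, nonnegativity of $P_{0,i}$ forces $P_{0,i}=0$ for all $i$, the equilibrium condition becomes $v^2 g_i^*=0$, and $v>0$ gives the unique equilibrium $g^*=0$ (equivalently, $s=0$ is the only root of $\pn(s)=0$ on $\R_{\ge0}$). When $\pno=\pmax$, equation \eqref{eq:Pmax} and the monotonicity above say that $\pmax$ is the strict global maximum of $\pn$, attained only at $s=g_l$; hence $s=g_l$ is the unique root and the unique equilibrium $g^*$ satisfies $\gn(g^*)=g_l$.

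The only step that requires genuine care is the equivalence in the first paragraph --- in particular, verifying that fixing the scalar root $s$ pins down the load conductances uniquely, that they are nonnegative, and that they automatically sum back to $s$; once this bijection is in place, everything else is a direct application of the monotonicity of $\pn$ and the Intermediate Value Theorem. A minor point worth stating explicitly is the existence of the finite $b$ with $\pn(b)<\pno$ in part~(a), which is what lets Lemma~\ref{lem:ivt} be applied on a bounded interval.
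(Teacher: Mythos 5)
Your proposal is correct and follows the same skeleton as the paper's proof: sum the equilibrium conditions $v^2g_i^*=P_{0,i}$ over $i\in N$ to obtain the scalar equation $\pn(\gn^*)=\pno$, count its roots, and lift each root $s$ back to the unique equilibrium $g_i^*=P_{0,i}/v(s)^2$. Where you differ is in the root-counting step. The paper clears denominators to turn $\pn(\gn^*)=\pno$ into the explicit quadratic \eqref{eq:poly-gn}, reads off existence and multiplicity from the discriminant (which degenerates to $(\gn^*-g_l)^2=0$ in case (c)), and then needs a separate contradiction argument using the sign of $\partial\dpn/\partial\gn$ to show the two roots straddle $g_l$. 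You instead argue purely from the unimodal shape of $\pn$ --- strictly increasing on $[0,g_l]$, strictly decreasing on $[g_l,\infty)$, with $\pn(0)=0$, $\pn(g_l)=\pmax$, $\pn(s)\to0$ --- applying Lemma~\ref{lem:ivt} on each monotone branch. This handles all three cases uniformly, makes the location of the roots relative to $g_l$ automatic rather than a separate step, and is in fact the argument for which the paper states Lemma~\ref{lem:ivt} without visibly invoking it; the paper's algebraic route buys closed-form expressions for $\gn^{(1)},\gn^{(2)}$, which you do not obtain but which the theorem does not require. Your attention to the consistency check $\sum_i g_i^*=s$ and to the existence of a finite right endpoint $b$ with $\pn(b)<\pno$ closes the two small gaps such an argument could otherwise have.
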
 
\begin{proof}
	Let $g^*$ an equilibrium of \eqref{eq:infl}. Summing \eqref{eq:infl} for all $i\in N$ together with \eqref{eq:dptot} and \eqref{eq:Pn} gives
	
	\begin{equation}\label{eq:Dpnstar}
	    \dpn(\gn^*)
		=\frac{\left(Eg_l\right)^2\gn^*}{(\gn^*+g_l)^2}-\pno=0.
	\end{equation}
		
	\begin{enumerate}[wide, labelindent=0pt, label=\alph*)]
	\item Since $\pno>0$, we multiply both sides of \eqref{eq:Dpnstar} by $-\frac{(\gn^*+g_l)^2}{\pno}$ to get
	\begin{equation}\label{eq:poly-gn}
		\left(\gn^*\right)^2 + \left(2g_l-\frac{\left(Eg_l\right)^2}{\pno}\right)\gn^*+g_l^2=0.
	\end{equation}
	This is a second order polynomial in $\gn^*$ and has at most two real roots $\gn^{(1)}, \gn^{(2)}$.
	
    Equation \eqref{eq:poly-gn} has 2 distinct real roots if and only if 
	    \begin{align}
    		&
    		\left(2g_l-\frac{\left(Eg_l\right)^2}{\pno}\right)^2 - 4g_l^2 >0\;\iff\nonumber\\
     		& \pno<\frac{\left(Eg_l\right)^2}{4g_l}=\frac{E^2 g_l}{4}\overset{\eqref{eq:Pmax}}{=}\pmax. 
    		\label{eq:poly-gn-condition}
	    \end{align}
	     Thus, when $\pno<\pmax$ then \eqref{eq:poly-gn} has two equilibria  $g_1^*$ and $g_2^*$  whose sum, $\gn^{(1)}$ and $\gn^{(2)}$, are distinct. We assume, without loss of generality,  $\gn^{(1)}<\gn^{(2)}$. Consider the function $\dpn(\gn)$ given in \eqref{eq:Dpnstar}. It follows from \eqref{eq:dptot} that
	     \begin{equation*}
		 \begin{split}
			\frac{\partial \dpn(\gn)}{\partial \gn}=\frac{\partial 
			\pn(\gn)}{\partial \gn}   \overset{\eqref{eq:dPtgt}}{=}
			\begin{cases}
			>0,  \quad \gn<g_l    \\
			<0,  \quad \gn>g_l
			\end{cases}
		 \end{split}
	    \end{equation*}
    	We can prove by contradiction that the two roots of $\dpn(\gn)$ satisfy \eqref{eq:infl-cond}. If \eqref{eq:infl-cond}  is not met, then both equilibria $\gn^{(1)},\gn^{(2)}$ are either in  $(0,g_l)$ or in $(g_l,+\infty)$. In the first interval $\dpn$ is strictly increasing, whereas in the second $\dpn$ is strictly decreasing. In this case,  
    	$$0=\Delta\pn\left(\gn^{(1)}\right) \neq \Delta\pn\left(\gn^{(2)}\right)=0, $$ contradiction. Therefore, \eqref{eq:infl-cond} holds. 
    	
    	Finally, we show that \eqref{eq:infl} has exactly two equilibria by computing all possible equilibria. Substituting $\gn^{(i)}$, $i=1,2$, into \eqref{eq:vg} results in
    	\begin{align*}
    		v\left(\gn^{(i)}\right)=\frac{Eg_l}{\gn^{(i)}+g_l}\overset{\eqref{eq:infl}}{\Rightarrow} g_{j}^{(i)}=\frac{P_{0,i}}{v^2\left(\gn^{(i)}\right)}, \, i=1,2,
    	\end{align*}
    	for all $j\in N$.
    	
    	\item Since $P_{0,i}\geq 0$ $\forall i\in N$, then the trivial case where $\pno=0$ arises only when $P_{0,i}=0$ $\forall i\in N$. In that case \eqref{eq:infl} immediately leads to $g_i^*=0$ $\forall i\in N$. We can also check that when $g^*_2\geq 0$ and $\gn(g_2^*)=+\infty$, then $g_2^*$ also satisfies \eqref{eq:Dpnstar}. However, $g_2^*$ is not finite.
    	
    	\item 	In the non-trivial case where $\pno=\pmax\overset{\eqref{eq:Pmax}}{=}\frac{E^2g_l}{4}$, the discriminant of \eqref{eq:poly-gn} becomes
	\begin{equation*}
		\Delta=\!\left(\!2g_l-\frac{\left(Eg_l\right)^2}{\pmax}\!\right)-4g_l^2\!\overset{\eqref{eq:Pmax}}{=}\! \left(\!2g_l-\frac{\left(Eg_l\right)^2}{\frac{E^2g_l}{4}}\!\right)-4g_l^2=0.
	\end{equation*}
	
	Then, \eqref{eq:poly-gn} becomes
	$$
		\left(\gn^*\right)^2 + \left(2g_l-\frac{\left(Eg_l\right)^2}{\frac{E^2g_l}{4}}\right)\gn^*+g_l^2=\left(\gn^*-g_l\right)^2=0,
	$$
	which has the unique solution $\gn^*=g_l$. Substituting $\gn^*=g_l$ into \eqref{eq:vg}  leads to $v^*=\frac{E}{2}$. Plugging $v^*$ into \eqref{eq:infl} and solving for $g_i^*$ gives the unique solution $g_i^*=\frac{4 P_{0,i}}{E^2}$ $\forall i\in N$.
	\end{enumerate}\end{proof}
\subsection{Stable Region Characterization}

We end this section by characterizing the region of the state space that admits locally asymptotically stable equilibrium points of \eqref{eq:infl}. 
The following lemma will be instrumental for this task.
\begin{lemma}[Rank-1 plus scaled identity matrix]\label{lem:evals-r1+identity}
	For  $w\in\C^{n\times 1}$, $\mathbf{1}_n\in\R^n$ a column vector of all ones, $\I_n\in\C^{n\times n}$ the $n\times n$ identity matrix, and $q\in\C$, the eigenvalue-eigenvector pairs $\left(\rho,x\right)\in\C\times\C^n$ of the Rank-1 Plus Scaled Identity (RPSI) matrix $B=w\mathbf{1}_n^T+q\I_n$ are
	\begin{equation}\label{eq:evecs-evals-r1+identity}
		\left(\rho,x\right)=\begin{cases}
			\left(q,e_1-e_i\right), & i\in\left\{1,..,n-1\right\};\\
			\left(\sum_{i=1}^n w_i+q,w\right),& i=n. \\
		\end{cases}
	\end{equation}
\end{lemma}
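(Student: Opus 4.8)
The plan is to exploit the identity $Bx = w(\mathbf{1}_n^Tx) + qx$ valid for every $x\in\C^n$: the image of any vector under $B$ is a linear combination of $x$ itself and the fixed vector $w$, with the coefficient of $w$ being the scalar $\mathbf{1}_n^Tx = \sum_{j=1}^n x_j$. This structure naturally splits $\C^n$ into the hyperplane $H := \{x\in\C^n : \mathbf{1}_n^Tx = 0\}$ and a single complementary direction, and on each piece the eigenproblem becomes trivial.

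First I would handle $H$. For any $x\in H$ we have $\mathbf{1}_n^Tx = 0$, hence $Bx = qx$, so every nonzero vector of $H$ is an eigenvector of $B$ with eigenvalue $q$. Since $H$ is the kernel of the nonzero functional $x\mapsto\mathbf{1}_n^Tx$ it has dimension $n-1$, and the $n-1$ vectors $e_1-e_2,\dots,e_1-e_n$ listed in the first line of \eqref{eq:evecs-evals-r1+identity} are linearly independent and all lie in $H$, hence form a basis of it. This establishes that $q$ is an eigenvalue of geometric multiplicity at least $n-1$ with the stated eigenvectors.

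Next I would produce the remaining eigenpair by testing $x = w$ directly: $Bw = w(\mathbf{1}_n^Tw) + qw = \big(\sum_{i=1}^n w_i + q\big)w$, so whenever $w\neq 0$ the vector $w$ is an eigenvector with eigenvalue $\sum_{i=1}^n w_i + q$. To see that this exhausts the spectrum I would use the trace: $\operatorname{tr}(B) = \mathbf{1}_n^Tw + nq = \sum_{i=1}^n w_i + nq$, and since the eigenvalues sum to the trace and $q$ already accounts for $n-1$ of them, the last eigenvalue must equal $\sum_{i=1}^n w_i + q$ regardless of the position of $w$. Alternatively one argues by cases: if $\sum_{i=1}^n w_i\neq 0$ then $w\notin H$, so $\{e_1-e_2,\dots,e_1-e_n,w\}$ is a basis of eigenvectors and \eqref{eq:evecs-evals-r1+identity} is complete; if $\sum_{i=1}^n w_i = 0$ (which includes $w=0$, where $B = q\I_n$) then $w\in H$ and the second branch of \eqref{eq:evecs-evals-r1+identity} collapses to $q$, consistently.

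The routine bookkeeping — linear independence of the $e_1-e_i$ and the dimension count for $H$ — is immediate. The only delicate point is the degenerate configuration $\sum_{i=1}^n w_i = 0$, where $w$ and $\mathbf{1}_n$ fail to be in general position, the two branches of \eqref{eq:evecs-evals-r1+identity} merge, and one must confirm that no eigenvalue has been overlooked; I expect this case check to be the main (and essentially only) obstacle, and the trace argument above is the cleanest way to dispose of it uniformly.
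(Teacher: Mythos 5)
Your proof is correct and follows essentially the same route as the paper's: identify the $(n-1)$-dimensional hyperplane $\ker\mathbf{1}_n^T$ on which $B$ acts as $q\I_n$, and the complementary direction $w$ carrying the eigenvalue $\mathbf{1}_n^Tw+q$. You are in fact slightly more careful than the paper, whose assertion that $\left\{e_1-e_2,\dots,e_1-e_n,w\right\}$ spans $\C^n$ fails precisely in the degenerate case $\sum_{i=1}^n w_i=0$ that your trace argument disposes of.
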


We are now ready to characterize the stable region of \eqref{eq:infl}.
For that, we consider the set
\begin{align}\label{eq:reg-M}
M :=\left\{g\in\R_{\geq 0}^n: \sum_{i\in N} g_i < g_l\right\}.
\end{align}

\begin{theorem}[Stable Region Characterization]\label{thm:char-stab-reg}
	A hyperbolic equilibrium\footnote{An equilibrium is hyperbolic if the Jacobian of the system at the equilibrium point is nonsingular.} $g^*$ of \eqref{eq:infl} is stable if and only if $g^*\in M$.
\end{theorem}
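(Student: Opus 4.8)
The plan is to linearize \eqref{eq:infl} at an arbitrary equilibrium $g^*$, recognize that the resulting Jacobian has the Rank-1 Plus Scaled Identity structure of Lemma \ref{lem:evals-r1+identity}, read off its spectrum, and then invoke Lyapunov's indirect method \cite{khalil2002nonlinear}: for a hyperbolic equilibrium, stability is equivalent to all eigenvalues of the Jacobian lying in the open left half-plane (a hyperbolic equilibrium can never be stable-but-not-asymptotically-stable). So the entire argument reduces to computing that Jacobian and locating one sign.

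First I would differentiate the vector field. Writing the $i$-th component of \eqref{eq:infl} as $f_i(g)=-(v(\gn)^2 g_i - P_{0,i})$ and using $\partial v/\partial g_j = -v/(\gn+g_l)$ for every $j$, a short computation gives
\[
	\frac{\partial f_i}{\partial g_j}=\frac{2v^2 g_i}{\gn+g_l}-v^2\,\delta_{ij}.
\]
Evaluated at $g^*$, where the equilibrium condition $v^2 g_i = P_{0,i}$ holds for all $i\in N$, this is exactly $J = w\mathbf{1}_n^T + q\,\I_n$ with $w_i = 2P_{0,i}/(\gn(g^*)+g_l)$ and $q=-v(g^*)^2$ — crucially, the off-diagonal entry in row $i$ does not depend on $j$, so Lemma \ref{lem:evals-r1+identity} applies verbatim. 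Its eigenvalues are $q=-v^2$ with multiplicity $n-1$ and the simple eigenvalue $\sum_{i\in N} w_i + q = 2\pno/(\gn+g_l) - v^2$.

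Next I would observe that at any finite equilibrium $v(g^*)>0$ (from \eqref{eq:vg}), so the $(n-1)$-fold eigenvalue $q=-v^2$ is always strictly negative; stability is therefore decided entirely by the sign of the simple eigenvalue. Here I would use the power-balance identity obtained by summing \eqref{eq:Pi} over $N$ and imposing equilibrium, namely $\pno = \pn = v^2\gn$, to simplify
\[
	\sum_{i\in N} w_i + q = \frac{2v^2\gn}{\gn+g_l}-v^2 = v^2\,\frac{\gn-g_l}{\gn+g_l}.
\]
This is negative when $\gn(g^*)<g_l$, positive when $\gn(g^*)>g_l$, and zero exactly when $\gn(g^*)=g_l$ — and the last case is precisely the non-hyperbolic one (it coincides with part (c) of Theorem \ref{thm:char-infl}), hence excluded by assumption. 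Therefore, for a hyperbolic $g^*$, all eigenvalues of $J$ lie in the open left half-plane if and only if $\gn(g^*)<g_l$, i.e., if and only if $g^*\in M$ by \eqref{eq:reg-M}; in the opposite case $J$ has an eigenvalue with positive real part and $g^*$ is unstable. I expect the only delicate points to be purely bookkeeping: verifying that the off-diagonal term is genuinely $j$-independent so that the RPSI lemma is directly applicable, and substituting $\pno=v^2\gn$ correctly to obtain the clean sign condition; the linearization itself is routine.
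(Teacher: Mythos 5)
Your proposal is correct and follows essentially the same route as the paper's proof: you compute the same Jacobian, recognize it as an RPSI matrix so that Lemma \ref{lem:evals-r1+identity} gives the eigenvalues $-v^2$ (multiplicity $n-1$) and $v^2(\gn-g_l)/(\gn+g_l)$, and conclude via Lyapunov's indirect method. The only cosmetic difference is that you substitute $\pno=v^2\gn$ to express the simple eigenvalue through the demand, and you spell out explicitly that the non-hyperbolic case is exactly $\gn(g^*)=g_l$; neither changes the argument.
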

\begin{proof}
	Let $g^*$ be an equilibrium of \eqref{eq:infl}, {i.e.,} $\Delta P_i(g^*)=0$ for all $i \in N$, $v^*:=v(g^*)$ and $\gn^*:=\gn(g^*)$. The Jacobian of the system \eqref{eq:infl} evaluated at $g^*$ is given by
	\begin{equation}\label{eq:J}
		J\left(g^*\right) =  \frac{2{v^*}^2}{\gn^*+g_l}g^*\mathbf{1}_n^T-{v^*}^2 \I_n.
	\end{equation}
	
	Moreover, $J\left(g^*\right)$ is a RPSI matrix. Therefore, we can compute its eigenvalues by substituting $q=-{v^*}^2$ and $w=\frac{2{v^*}^2}{\gn^*+g_l}g^*$ in Lemma \ref{lem:evals-r1+identity}: 
	\begin{align*}
	\lambda_i(J)=
	\begin{cases}
		-{v^*}^2, &  i= 1,..,n-1; \\
		\frac{2{v^*}^2}{\gn^*+g_l}\gn^*-{v^*}^2,&i=n.
	\end{cases} 
	\end{align*}
	We can now prove the statement of the theorem.
	
	($\Rightarrow$) If $g^*$ is an asymptotically stable hyperbolic equilibrium, then $J(g^*)$ is Hurwitz and thus:
	$\lambda_n(J)<0 \Rightarrow {v^*}^2\left(\frac{2\gn^*}{\gn^*+g_l}-1\right)<0 \Rightarrow \gn^*<g_l$. \\
	($\Leftarrow$) If $g^* \in M$, then: $\lambda_n(J)<0$. Since all eigenvalues of $J(g^*)$ are negative, by Lyapunov's Indirect Method~\cite[Theorem 3.5]{khalil2002nonlinear} $g^*$ is asymptotically stable. \end{proof}

\begin{figure}
    \begin{center}
        \includegraphics[width=1.0\columnwidth]{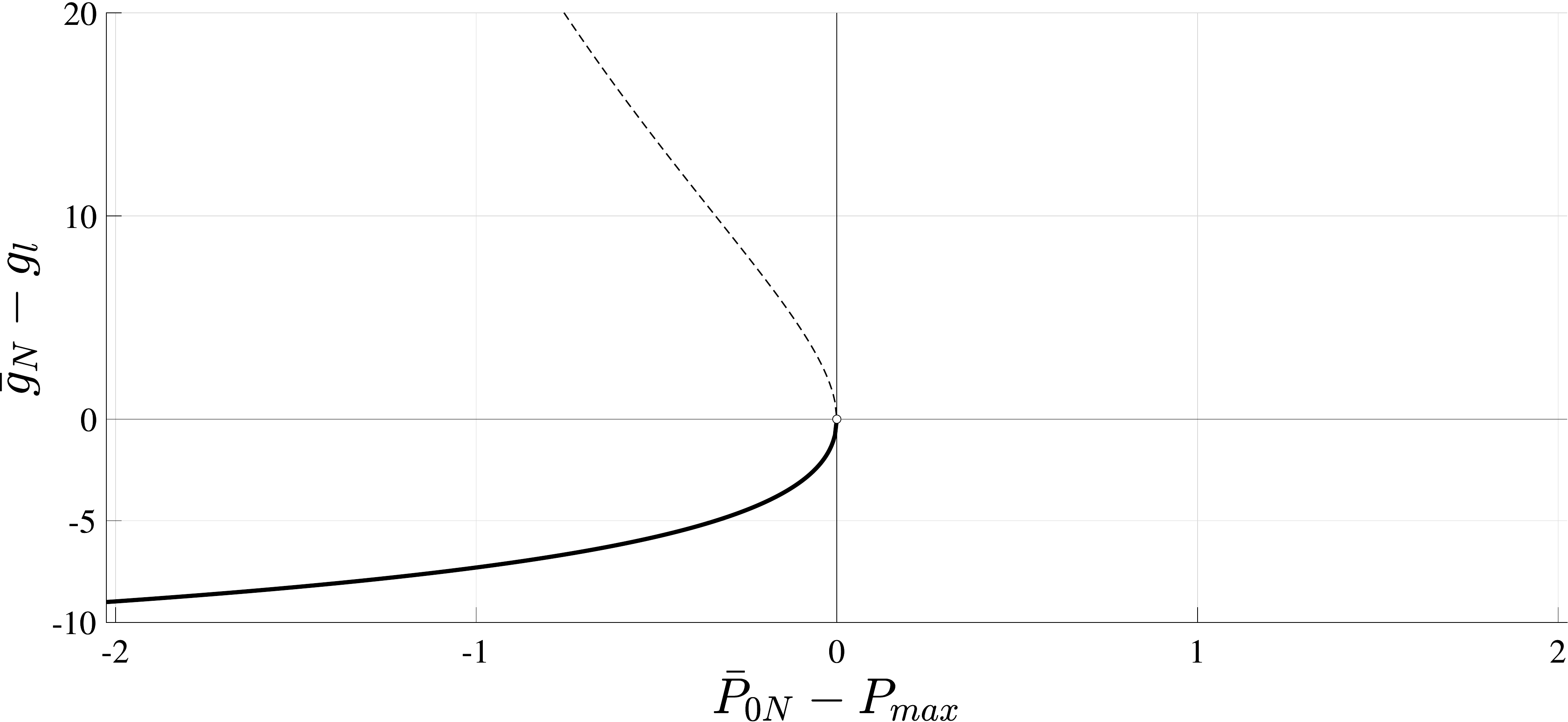}      
    \caption{Bifurcation Diagram of \eqref{eq:infl}. As total demand approaches the capacity of the line ($\pno=\pmax$), the two equilibria of \eqref{eq:infl} converge to $\gn(g_1^*)=\gn(g_2^*)=g_l$, coallesce at $\pno=\pmax$, and disappear for $\pno>\pmax$. Therefore, the system \eqref{eq:infl} undergoes a saddle-node bifurcation at $\pno=\pmax$. }                    
    \label{fig:saddlenode}                                                         
    \end{center}                                                            
\end{figure}

The above analysis shows how the load model \eqref{eq:infl} captures the underlying principles of voltage collapse. More precisely, whenever $\pno<\pmax$, the system has two equilibria, one of which allows each load to stably meet its own demand (provided that $\gn^*<g_l$) by updating its own conductance (Theorem \ref{thm:char-stab-reg}). Figure \ref{fig:saddlenode} shows that the system undergoes a saddle-node bifurcation at $\pno=\pmax$ and the equilibria disappear for $\pno>\pmax$. At any operating point $\pno>\pmax$, the network is unable to provide sufficient power to all loads. As a result, every load seeks to increase its conductance, leading to voltage collapse (Theorem \ref{thm:vc}). Notably, it is this individual (selfish) action of each loads that leads to the overall system collapse, a behavior akin to the game-theoretic notion of the Tragedy of the Commons \cite{faysse2005}. This observation motivates investigating coordination strategies that address the selfish behavior. In this way, we seek to transform the type of bifurcation happening at $\pno=\pmax$ such that a locally asymptotically stable equilibrium exists, at the bifurcation point, and for operating points $\pno>\pmax$. 
\section{Voltage Collapse Stabilization}\label{sec:eqlbrm-analysis}
In this section we illustrate that it is possible to prevent voltage collapse by allowing a certain level of coordination among a subset of \textit{flexible} loads that are willing to coordinate their actions.

\subsection{VCS Controller}
We now consider the system \eqref{eq:infl}-\eqref{eq:fl}, with $F\not=\emptyset$.
For each flexible load $i\in F$, we seek to control its consumption using 
\begin{subequations}\label{eq:scc-dynamics-0}
	\begin{alignat}{2}
		\label{eq:scc-dg-f-0} u_i =& -\Delta P_i(g) - \kappa_i \cdot\phi + b\left(\hat{g}_i - g_i\right)
		, & \forall i\in F \\
		\label{eq:scc-dghat-0} \dot{\hat{g}}_i =& -a\left(\hat{g}_i-g_i\right), & \forall i\in F  \\
		\intertext{where the shared state $\phi$ evolves according to}
		\label{eq:scc-dphi-0}\dot{\phi}=&  \dpn(\gn) \cdot  
		\frac{\partial}{\partial \gn} \dpn\left(\gn\right).& 
	\end{alignat}
\end{subequations}
A few explanations are in order. The parameters $\kappa_i>0$, $i\in F$, in \eqref{eq:scc-dg-f-0} are called load shedding parameters, and will be used to control the fraction of  excess consumption that is shed by load $i\in F$.
We will further define $\bar{\kappa}:=\sum_{i\in F} \kappa_i$, and $a,b>0$. We will see in the following section that $a$ and $b$ affect the stability of the equilibria of the full system (c.f. \eqref{eq:scc-dynamics}). The second term in \eqref{eq:scc-dg-f-0} ensures proportional load shedding. Finally, the third term in \eqref{eq:scc-dg-f-0} together with \eqref{eq:scc-dghat-0} introduces dynamic damping, and is akin to regularization techniques present in saddle-point dynamics~\cite{you2020}. 
If the demand of an inflexible load is zero then we omit it. Therefore, the above dynamics implicitly assume that  $P_{0,j}>0$ for all $j\in I$.

For the vector $(\begin{array}{ccc} g, & \phi, & \hat{g} \end{array})\in\R^n_{\geq 0}\times \R \times\R^{n_F}$ of state variables, we obtain the following closed-loop dynamics:
\begin{subequations}\label{eq:scc-dynamics}
	\begin{alignat}{2}
		\label{eq:scc-dg-f} \dot{g}_i =& -\Delta P_i(g) - \kappa_i \cdot\phi + b\left(\hat{g}_i - g_i\right)
		, & \forall i\in F \\
		\label{eq:scc-dg-i}\dot{g}_i =& -\Delta P_i(g), & \forall i\in I \\
		\label{eq:scc-dphi}\dot{\phi}=&  \dpn(\gn) \cdot  
		\frac{\partial}{\partial \gn} \dpn\left(\gn\right),& 
		  \\
		\label{eq:scc-dghat} \dot{\hat{g}}_i =& -a\left(\hat{g}_i-g_i\right), & \forall i\in F
	\end{alignat}
\end{subequations}
\subsection{Equilibrium Characterization}
We end this section by providing a characterization of the equilibria of \eqref{eq:scc-dynamics}.

\begin{theorem}[Equilibrium Characterization of \eqref{eq:scc-dynamics}]\label{thm:char-scc}
The system \eqref{eq:scc-dynamics} has two sets of equilibria, the \emph{load satisfaction} set 
\begin{equation}\label{eq:fcc-eqset-1}
    \begin{split}
    	\mathcal{E}_s = &\Bigg\{  \left(\begin{array}{ccc} g^*,&\phi^*,&\hat{g}^* \end{array}\right)\in \R_{\geq 0}^n\times\R\times\R^{n_F} : \Delta P_i(g^*)=0 \\
    	&\quad \forall i\in N, \,\, \phi^*=0, \quad \hat{g}^*=g^*  \Bigg\},
	\end{split}	
\end{equation}
where the demand of all loads is satisfied, and the \emph{proportional allocation} set 
\begin{equation}\label{eq:fcc-eqset-2}
	\begin{split}
		\mathcal{E}_p = \Big\{&\left(\begin{array}{ccc} g^*,&\phi^*,&\hat{g}^* \end{array}\right)\in \R_{\geq 0}^n\times\R \times \R^{n_F} : \\
		&\Delta P_i(g^*)=\frac{\kappa_i}{\bar{\kappa}} \left(\pmax-\pno\right) \, \forall i\in F,\\
		&\Delta P_i(g^*)=0 \,\, \forall i\in I, \,\phi^*  =\frac{\pno - \pmax}{\bar{\kappa}}, \\ &\hat{g}^*=g^*  \Big\},
	\end{split}
\end{equation}
where the difference $ \pno-\pmax $ is proportionally allocated among flexible loads.
\end{theorem}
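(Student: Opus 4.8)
The plan is to set every right-hand side of the closed-loop dynamics \eqref{eq:scc-dynamics} to zero and solve the resulting algebraic system, showing the solution set is exactly $\mathcal E_s \cup \mathcal E_p$. First I would note that \eqref{eq:scc-dghat}$=0$ forces $\hat g_i^* = g_i^*$ for all $i\in F$ in either case, so that equation contributes nothing further and can be set aside. Next, \eqref{eq:scc-dg-i}$=0$ immediately gives $\Delta P_i(g^*)=0$ for all inflexible loads $i\in I$, which appears in both $\mathcal E_s$ and $\mathcal E_p$. The remaining content is in \eqref{eq:scc-dg-f} and \eqref{eq:scc-dphi}.

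From \eqref{eq:scc-dphi}$=0$ we get $\dpn(\gn^*)\cdot \frac{\partial}{\partial\gn}\dpn(\gn^*)=0$, so either $\dpn(\gn^*)=0$ or $\frac{\partial}{\partial\gn}\dpn(\gn^*)=0$. I would treat these two cases separately. In the first case, $\dpn(\gn^*)=0$ means total supply equals total demand; combined with $\hat g^*=g^*$, equation \eqref{eq:scc-dg-f}$=0$ reduces to $\Delta P_i(g^*) = -\kappa_i\phi^*$ for all $i\in F$. Summing over $i\in F$ and using that $\sum_{i\in I}\Delta P_i(g^*)=0$ together with $\sum_{i\in N}\Delta P_i(g^*)=\dpn(\gn^*)=0$ gives $\sum_{i\in F}\Delta P_i(g^*)=0 = -\bar\kappa\,\phi^*$, hence $\phi^*=0$ (since $\bar\kappa>0$), and then $\Delta P_i(g^*)=0$ for every $i\in F$ as well. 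This yields precisely $\mathcal E_s$. In the second case, recalling from \eqref{eq:dPtgt}–\eqref{eq:Pmax} that $\frac{\partial}{\partial\gn}\dpn(\gn) = \frac{\partial}{\partial\gn}\pn(\gn)$ vanishes exactly at $\gn^*=g_l$, where $\pn(g_l)=\pmax$, we get $\dpn(\gn^*) = \pmax - \pno$. Again using $\hat g^*=g^*$, equation \eqref{eq:scc-dg-f}$=0$ becomes $\Delta P_i(g^*) = -\kappa_i\phi^*$ for $i\in F$; summing over $F$ and using $\sum_{i\in F}\Delta P_i(g^*) = \sum_{i\in N}\Delta P_i(g^*) = \dpn(\gn^*) = \pmax - \pno$ gives $-\bar\kappa\,\phi^* = \pmax-\pno$, i.e. $\phi^* = \frac{\pno-\pmax}{\bar\kappa}$, and back-substituting yields $\Delta P_i(g^*) = \frac{\kappa_i}{\bar\kappa}(\pmax-\pno)$ for each $i\in F$. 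This is exactly $\mathcal E_p$.

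Finally I would observe that the two cases overlap precisely when $\pno=\pmax$ (then $\mathcal E_s$ and $\mathcal E_p$ coincide on the flexible-load conditions since $\phi^*=0$ and $\Delta P_i(g^*)=0$), so no equilibria are lost or double-counted, and conversely every element of $\mathcal E_s$ and of $\mathcal E_p$ is readily checked to annihilate all four right-hand sides. I do not anticipate a serious obstacle here — the argument is essentially bookkeeping of linear relations among the $\Delta P_i$ — but the one point requiring care is the reduction of $\frac{\partial}{\partial\gn}\dpn(\gn^*)=0$ to the single root $\gn^*=g_l$ and the identification $\dpn(g_l)=\pmax-\pno$; this is where the structure of the network capacity from Section~\ref{sec:preliminaries} (equations \eqref{eq:dPtgt} and \eqref{eq:Pmax}) is used, and it is worth stating explicitly that $\frac{\partial}{\partial\gn}\pn$ has no other zero on $\R_{\geq0}$. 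One should also remark that these conditions constrain only $\gn^*$ and the quantities $\Delta P_i(g^*)$; the existence of an actual $g^*\in\R^n_{\geq0}$ realizing a given $\gn^*$ (and hence a given $v^*$) with the prescribed per-load powers follows, as in the proof of Theorem~\ref{thm:char-infl}, from $g_i^* = P_i(g^*)/{v^*}^2$, so the sets $\mathcal E_s,\mathcal E_p$ are nonempty exactly under the loading conditions identified earlier.
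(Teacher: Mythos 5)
Your proposal is correct and follows essentially the same route as the paper's own proof: equilibrium conditions force $\hat g^*=g^*$ and $\Delta P_i(g^*)=-\kappa_i\phi^*$ on $F$, $\Delta P_i(g^*)=0$ on $I$, and the case split on $\dpn(\gn^*)=0$ versus $\tfrac{\partial}{\partial\gn}\dpn(\gn^*)=0$ (the latter occurring only at $\gn^*=g_l$, where $\pn=\pmax$) yields $\mathcal E_s$ and $\mathcal E_p$ respectively after summing over $F$ to pin down $\phi^*$. Your added remarks on the coincidence of the two cases at $\pno=\pmax$ and on recovering $g^*$ from $g_i^*=P_i(g^*)/{v^*}^2$ are consistent with, and slightly more explicit than, the paper's treatment.
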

\begin{proof}
If $(g^*,\phi^*,\hat{g}^*)$ is an equilibrium of \eqref{eq:scc-dynamics}, then \eqref{eq:scc-dghat} implies that $\hat{g}^*_i=g_i^*$ for all $i\in F$. Moreover, \eqref{eq:scc-dg-f}, \eqref{eq:scc-dg-i} lead to
\begin{equation}\label{eq:scc-cond-g}
	\Delta P_i(g^*)=-\kappa_i\phi^* \,\, \forall i\in F,\quad \Delta P_i(g^*) = 0  \,\, \forall i\in I.
\end{equation}
Last, by \eqref{eq:scc-dphi} either $\dpn(\gn^*)=0$ or $\frac{\partial \dpn(\gn)}{\partial \gn}\Big\rvert_{\gn^*}=0$.
\begin{itemize}[leftmargin=*]
	\item If $\dpn(\gn^*)=0$, then by \eqref{eq:scc-cond-g}
	$$
	    0=\dpn(\gn^*)=\sum_{i\in N} \Delta P_i(g^*)\overset{\eqref{eq:scc-cond-g}}{=} -\bar{\kappa} \phi^* \Rightarrow 
		\phi^* = 0.
	$$
	Substituting $\phi^*=0$ back into \eqref{eq:scc-cond-g} gives $ \Delta P_i\left(g^*\right)=0$ for all $i\in N$. Hence, the first set of equilibria is
	\begin{equation}\
    	\begin{split}
    		\mathcal{E}_s = &\Bigg\{  \left(\begin{array}{ccc} g^*,&\phi^*,&\hat{g}^* \end{array}\right)\in \R_{\geq 0}^n\times\R\times\R^{n_F} : \Delta P_i(g^*)=0 \\
    		&\quad \forall i\in N, \,\, \phi^*=0, \quad \hat{g}^*=g^*  \Bigg\}.
    	\end{split}	
    \end{equation}
	When $\pno<\pmax$, by Theorem \ref{thm:char-infl} there exist two equilibrium vectors of conductances $g^*$ such that $\Delta P_{i}(g^*)=0$ $\forall i\in N$. Therefore, the load satisfaction set $\mathcal{E}_s$ comprises two equilibria $\left(\begin{array}{ccc}g^*_1, &\phi^*, & \hat{g}_1^*\end{array}\right)$, $\left(\begin{array}{ccc}g^*_2, &\phi^*, & \hat{g}_2^*\end{array}\right)$ such that $\gn\left(g^*_1\right)<g_l<\gn\left(g^*_2\right)$. 
	\item If $\frac{\partial \dpn(\gn)}{\partial 
	\gn}\Big\rvert_{\gn^*}=0$, then by \eqref{eq:Pmax} 
	$\gn^*=g_l$ and $\pn^*=\pmax$, {i.e.,}
	\begin{align*}
		&\sum_{i\in N} 
		\left(P_i\left(g^*\right) - P_{0,i} \right)=\pmax 
		-\sum_{i\in N} P_{0,i} \\
		\overset{\eqref{eq:DPi}}{\Rightarrow}& \sum_{i\in F} \Delta P_i\left(g^*\right)+\sum_{i\in I} \Delta P_i\left(g^*\right)=\pmax-\pno \\
	    \overset{\eqref{eq:scc-cond-g}}{\Rightarrow}& -\sum_{i\in F}\kappa_i\phi^* = \pmax-\pno \Rightarrow \phi^*=\frac{ \pno-\pmax}{\bar{\kappa}}.
	\end{align*}
	Substituting $\phi^*=\frac{\pno-\pmax}{\bar{\kappa}}$ back into \eqref{eq:scc-cond-g} gives $ \Delta P_i(g^*)=\frac{\kappa_i}{\bar{\kappa}}\left(\pmax-\pno\right)$ $\forall i\in F$. Hence, the second set of equilibria is
	\begin{equation}
		\begin{split}
			\mathcal{E}_p = \Big\{&\left(\begin{array}{ccc} g^*,&\phi^*,&\hat{g}^* \end{array}\right)\in \R_{\geq 0}^n\times\R \times \R^{n_F} : \\
			&\Delta P_i(g^*)=\frac{\kappa_i}{\bar{\kappa}} \left(\pmax-\pno\right) \, \forall i\in F,\\
			&\Delta P_i(g^*)=0 \,\, \forall i\in I, \,\phi^*  =\frac{\pno - \pmax}{\bar{\kappa}}, \\ &\hat{g}^*=g^*  \Big\}.
		\end{split}
	\end{equation}
\end{itemize}
   Since $\gn^*=g_l$ when $\left(g^*,\,\, \phi^*, \,\, \hat{g}^*\right)\in\mathcal{E}_p$, then \eqref{eq:vg} defines a unique voltage $v^*=\frac{E}{2}$, which in turn leads to the unique vector of conductances  
	   \begin{equation}
	    \begin{split}
	    	&\Delta P_i(g^*)=\begin{cases}
	    	  	\frac{\kappa_i}{\bar{\kappa}} \left(\pmax-\pno\right), &\quad \forall i\in F \\
	    	  	0, &\quad \forall i\in I
	    	  	\end{cases} \\
	    	\Rightarrow &g_i^*=\begin{cases}
	    	    \frac{1}{\left(\frac{E}{2}\right)^2}\left(P_{0,i}+\frac{\kappa_i}{\bar{\kappa}} \left(\pmax-\pno\right)\right), &\quad \forall i\in F \\
	    	    \frac{1}{\left(\frac{E}{2}\right)^2}P_{0,i}, &\quad \forall i\in I
	    	    \end{cases}
	    \end{split}
	   \end{equation}
    Therefore, for any demand $\pno\geq 0$, the proportional allocation set $\mathcal{E}_p$ is a singleton $ \mathcal{E}_p=\left\{\left(g^*,\,\,\phi^*,\,\,\hat{g}^*\right)\right\}$. 

\end{proof}

An equilibrium of the \textit{load satisfaction} set $g^*\in\mathcal{E}_s$ ensures that all loads $i\in N$ satisfy their individual demand $P_i(g^*)=P_{0,i}$. Similarly, the equilibrium of the \textit{proportional allocation} set $g^*\in\mathcal{E}_p$ ensures that the amount $\pno-\pmax$ is distributed among flexible loads proportional to the load-shedding parameter $\kappa_i$, $i\in F$.

\section{Stability Analysis of VCS Controller}\label{sec:stab-analysis}
In this section we evaluate the stability of the equilibria of \eqref{eq:scc-dynamics} under different loading conditions. To assess stability of each equilibrium we study the linearised model at that point. 

\subsection{Stability analysis of efficient operating points}\label{subsec:vcs-stability-of-efficient-points}
The Jacobian of \eqref{eq:scc-dynamics} is
\begin{equation}\label{eq:jacob-scc}
	\begin{split}
		&J_{SC}\left(g\right)= \\
		&\left[\begin{array}{ccc} J(g) - \left[\begin{array}{c|c} b\I_F & \mathbf{0}_{n_F} \mathbf{0}_{n_I}^T \\ \hline \mathbf{0}_{n_I}\mathbf{0}_{n_F}^T & \mathbf{0}_{n_I} \mathbf{0}_{n_I}^T
		\end{array}\right] & \left[\begin{array}{c} -\kappa \\ \hline \mathbf{0}_{n_I}
		\end{array} \right]& \left[\begin{array}{c} b\I_{n_F} \\ \hline \mathbf{0}_{n_I}
		\end{array}\right] \\ c \mathbf{1}_n^T & 0 & \mathbf{0}_{n_I}^T \\ \left[\begin{array}{c|c} a\I_{n_F} & \mathbf{0}_{n_F}\mathbf{0}_{n_I}^T
		\end{array}\right] & \mathbf{0}_{n_F} & -a\I_{n_F} \end{array}\right]
	\end{split}
\end{equation}
with $\mathbf{0}_n\in\R^{n}$ column vector of all zeros.
\begin{lemma}[Eigenvalue Characterization of \eqref{eq:jacob-scc}]\label{lem:jsc-evals} Let $0<m<M<+\infty$, and consider $b>0$ in \eqref{eq:scc-dg-f} and $a>0$ in \eqref{eq:scc-dghat} such that
\begin{equation}\label{eq:stab-cond}
   a\in (0,v_{\min}^2), \quad v_{\min}:= \min\limits_{g:m\leq g_i \leq M, \,\, \forall i\in I} \quad v
\end{equation}
Matrix $J_{SC}\left(g\right)$ as in \eqref{eq:jacob-scc} has $2n_F-2$ eigenvalues that are roots of
	\begin{align}\label{eq:jacob-scc-cond1}
		\lambda^2 +\left(a+b+v^2\right)\lambda + av^2=0,
	\end{align}
	$n_I-1$ eigenvalues that are $-v^2$
	and four eigenvalues that are roots of
	\begin{subequations}\label{eq:jacob-scc-cond2}
		\begin{align}
			\label{eq:jacob-scc-cond2-main}&\lambda^4+\alpha_3\left(g\right)\lambda^3+\alpha_2\left(g\right)\lambda^2+\alpha_1\left(g\right)\lambda+\alpha_0\left(g\right)=0,\quad \,\,\,\, \\
			\label{eq:jacob-scc-cond2-alpha-0}& \alpha_0\left(g\right)=\bar{\kappa} ac v^2 \\
			\label{eq:jacob-scc-cond2-alpha-1}& \alpha_1\left(g\right)= a v^2 \tilde{\lambda}+\bar{\kappa} c\left(a+v^2\right)  \\
			\label{eq:jacob-scc-cond2-alpha-2}& \alpha_2\left(g\right)= \bar{\kappa} c +v^2\left(a+b \right)-\frac{2bv^2\gi}{\gn+g_l} 
			+ \tilde{\lambda}\left(a+v^2\right)  \\
			\label{eq:jacob-scc-cond2-alpha-3}& \alpha_3\left(g\right)=  b+a+v^2+\tilde{\lambda} 
		\end{align}
	\end{subequations}
	where
	\begin{equation}\label{eq:tilde-lambda} 
    \tilde{\lambda} \left(\gn\right) = v^2 - \frac{2v^2\gn}{\gn+g_l}
\end{equation}
\begin{equation}\label{eq:c(gt)}
	\begin{split}
		&c(\gn)= \left(\frac{ v^2\left(g_l-\gn\right) 
		}{\gn+g_l}\!\right)^2+ 
		\dpn\frac{2v^2(\gn-2g_l)}{(\gn+g_l)^2}
	\end{split}
\end{equation}

\end{lemma}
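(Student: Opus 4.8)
The plan is to compute the characteristic polynomial $\det\!\left(\lambda\I_n-J_{SC}(g)\right)$ in closed form, exploiting the rank-one-plus-scaled-identity (RPSI) structure (Lemma~\ref{lem:evals-r1+identity}) inherited from $J(g)$ together with two successive Schur complements. The starting point is that the $g$-$g$ block of $J_{SC}(g)$ in \eqref{eq:jacob-scc} is the RPSI matrix $J(g)=w\,\mathbf{1}_n^T-v^2\,\I_n$ with $w=\tfrac{2v^2}{\gn+g_l}\,g$: this is the form \eqref{eq:J}, now at an arbitrary $g$ rather than at an equilibrium, valid because $-\Delta P_i(g)=-(v^2g_i-P_{0,i})$ depends on $g$ only through $g_i$ and $v(\gn)$. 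The remaining rows and columns of $J_{SC}(g)$ are the scalar $\phi$-row $c\,\mathbf{1}_n^T$, the $\phi$-column $-[\kappa^T\ \mathbf{0}_{n_I}^T]^T$, and the $\hat g$-block $-a\,\I_{n_F}$, which is fed by $g_F$ through $a\,\I_{n_F}$ and feeds back into $g_F$ through $b\,\I_{n_F}$. The hypothesis \eqref{eq:stab-cond} plays no role in the factorization below; it is recorded for the stability conclusions drawn afterwards.

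First I would eliminate the $\hat g$-block: its diagonal sub-block in $\lambda\I_n-J_{SC}(g)$ is $(\lambda+a)\I_{n_F}$, invertible for $\lambda\neq-a$, so the Schur complement factors out $(\lambda+a)^{n_F}$ and, after collecting the induced feedback term $\tfrac{ab}{\lambda+a}$ with the pre-existing $-b$ on the flexible diagonal, replaces the flexible diagonal correction by $\mu:=b-\tfrac{ab}{\lambda+a}=\tfrac{b\lambda}{\lambda+a}$. Next I would eliminate the scalar $\phi$-block (diagonal entry $\lambda$, invertible for $\lambda\neq0$), which factors out $\lambda$ and replaces $w$ by $\hat w:=w-\tfrac{c}{\lambda}\,[\kappa^T\ \mathbf{0}_{n_I}^T]^T$. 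What remains is $\det\!\left(D-\hat w\,\mathbf{1}_n^T\right)$ with $D$ diagonal, equal to $\lambda+v^2+\mu$ on the $n_F$ flexible coordinates and $\lambda+v^2$ on the $n_I$ inflexible ones. The matrix determinant lemma gives $\det\!\left(D-\hat w\,\mathbf{1}_n^T\right)=\det(D)\,\bigl(1-\mathbf{1}_n^TD^{-1}\hat w\bigr)$, and a direct evaluation gives $\mathbf{1}_n^TD^{-1}\hat w=\dfrac{r\gf-c\bar{\kappa}/\lambda}{\lambda+v^2+\mu}+\dfrac{r\gi}{\lambda+v^2}$, where $r:=\tfrac{2v^2}{\gn+g_l}$.

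The simplification that makes everything collapse is the identity $\lambda+v^2+\mu=\dfrac{q(\lambda)}{\lambda+a}$, where $q(\lambda):=\lambda^2+(a+b+v^2)\lambda+av^2$ is exactly the quadratic in \eqref{eq:jacob-scc-cond1}. Substituting it, the powers of $(\lambda+a)$ cancel and, after clearing denominators, one obtains
\begin{equation*}
\det\!\left(\lambda\I_n-J_{SC}(g)\right)=q(\lambda)^{\,n_F-1}\,(\lambda+v^2)^{\,n_I-1}\,B(\lambda),
\end{equation*}
with $B(\lambda)=\lambda\,q(\lambda)\,(\lambda+v^2-r\gi)-(\lambda+v^2)(\lambda+a)\bigl(r\gf\,\lambda-c\bar{\kappa}\bigr)$, which is monic of degree $4$. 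Since both sides are polynomials in $\lambda$ of degree $2n_F+n_I+1$ that agree for every $\lambda$ outside the finite set $\{0,-a,-v^2\}\cup\{\lambda:q(\lambda)=0\}$ used in the Schur steps, they agree identically. Reading off the spectrum: $q(\lambda)$ contributes its two roots with multiplicity $n_F-1$, namely the $2n_F-2$ eigenvalues solving \eqref{eq:jacob-scc-cond1}; the factor $(\lambda+v^2)^{n_I-1}$ contributes $-v^2$ with multiplicity $n_I-1$; and $B(\lambda)$ contributes four eigenvalues.

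It then remains to expand $B(\lambda)$ and match it with \eqref{eq:jacob-scc-cond2}. Grouping the two $\lambda\,q(\lambda)$ terms as $\lambda\,q(\lambda)(\lambda+v^2-r\gi)$ and using $\tilde{\lambda}=v^2-r\gn=v^2-r\gf-r\gi$ from \eqref{eq:tilde-lambda}, the identity $b\,r\gi=\tfrac{2bv^2\gi}{\gn+g_l}$, and the definition \eqref{eq:c(gt)} of $c(\gn)$, one checks directly that $B(\lambda)=\lambda^4+\alpha_3(g)\lambda^3+\alpha_2(g)\lambda^2+\alpha_1(g)\lambda+\alpha_0(g)$ with $\alpha_3=a+b+v^2+\tilde{\lambda}$, $\alpha_1=av^2\tilde{\lambda}+\bar{\kappa}c(a+v^2)$, $\alpha_0=\bar{\kappa}\,a\,c\,v^2$, and $\alpha_2$ as stated. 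I expect the main obstacles to be (i) spotting the factorization $\lambda+v^2+\mu=q(\lambda)/(\lambda+a)$, without which the post-Schur expression stays unwieldy; (ii) being disciplined that each Schur step is valid only generically in $\lambda$, so the final factorization has to be promoted to all $\lambda$ by a polynomial-degree argument; and (iii) the bookkeeping in the last coefficient match. As an independent check on the first two factors one may instead exhibit eigenvectors: $(0,x_I,0,0)$ with $\mathbf{1}_{n_I}^Tx_I=0$ is an eigenvector for $-v^2$ (an $(n_I-1)$-dimensional space), and for each $x_F$ with $\mathbf{1}_{n_F}^Tx_F=0$ the subspace $\{(tx_F,0,0,sx_F):t,s\in\R\}$ is $J_{SC}$-invariant with restriction of characteristic polynomial $q(\lambda)$; the remaining $4$-dimensional invariant part then carries $B(\lambda)$.
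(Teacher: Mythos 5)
Your proof is correct, and I verified the key steps independently: the identity $\lambda+v^2+\mu=q(\lambda)/(\lambda+a)$ does make the $(\lambda+a)^{n_F}$ factor cancel, the resulting factorization $q(\lambda)^{n_F-1}(\lambda+v^2)^{n_I-1}B(\lambda)$ has the right degree $2n_F+n_I+1$, and expanding $B(\lambda)=\lambda q(\lambda)(\lambda+v^2-r\gi)-(\lambda+v^2)(\lambda+a)(r\gf\lambda-c\bar{\kappa})$ with $\tilde{\lambda}=v^2-r\gn$ reproduces $\alpha_0,\dots,\alpha_3$ exactly, including the $-\tfrac{2bv^2\gi}{\gn+g_l}$ term in $\alpha_2$ via $br\gi$. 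However, your route is genuinely different from the paper's. The paper works directly with the eigenvalue--eigenvector system $J_{SC}(g)u=\lambda u$ written in components: it first disposes of the special case $\lambda=-a$ (showing $-a$ is an eigenvalue only when $a=v^2-\tfrac{2v^2\gi}{\gn+g_l}$, and then checking at the end that this root is absorbed by the quartic rather than the quadratic), then eliminates $\hat u$ via $\hat u=\tfrac{a}{\lambda+a}u_F$, sums the inflexible and flexible equations to relate $\bar u_F$, $\bar u_I$, $u_\phi$, and splits on whether $\bar u_I=0$; the multiplicity counts $2n_F-2$ and $n_I-1$ are obtained by invoking Lemma~\ref{lem:evals-r1+identity} on the sub-blocks $J_F(g)$ and $J_I(g)$ to exhibit explicit eigenvectors of the form $e_1-e_i$. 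What your determinant/Schur-complement computation buys is that algebraic multiplicities come out automatically from the characteristic polynomial, the $\lambda\in\{0,-a,-v^2\}\cup q^{-1}(0)$ degeneracies are handled in one stroke by the polynomial-identity argument rather than by separate case analyses, and the hypothesis \eqref{eq:stab-cond} (together with $\gi>0$, which the paper needs in order to divide by $\gi$ and to conclude $(v^2-a)u_{I_i}=0\Rightarrow u_{I_i}=0$) is indeed not needed for the spectral characterization itself. What the paper's approach buys is explicit eigenvectors for every eigenvalue, which your closing remark recovers only for the $q(\lambda)$ and $-v^2$ blocks. Either way the statement is established; just be sure, if you write this up, to carry the sign conventions of $\lambda\I-J_{SC}$ carefully through both Schur steps (the $\phi$-column becomes $+\kappa$ and the $\phi$-row becomes $-c\mathbf{1}_n^T$), since the cancellation leading to $\hat w=w-\tfrac{c}{\lambda}[\kappa^T\ \mathbf{0}_{n_I}^T]^T$ depends on them.
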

\begin{proof} We provide a complete proof in Appendix \ref{appdx:proof-of-lemmas}.\end{proof}
%
%
%
\begin{theorem}[Stability of VCS Controller]\label{thm:stab} 
Consider the system \eqref{eq:scc-dynamics} with $a>0$ satisfying \eqref{eq:stab-cond}. For $a$ in \eqref{eq:scc-dghat} sufficiently small and $b>0$ in \eqref{eq:scc-dg-f}, the following  hold:
\begin{enumerate}[leftmargin=*]
	\item  When $ \pno > P_{\text{max}}$, the unique equilibrium $\left(g^*,\,\,\phi^*,\,\,\hat{g}^*\right)\in\mathcal{E}_p$ is locally asymptotically stable.
	\item  When $\pno < P_{\text{max}}$, the equilibrium $\left(g^*,\,\,\phi^*,\,\,\hat{g}^*\right)\in \mathcal{E}_s\cap \textnormal{cl}(M)=\left\{\left(g^*,\,\,\phi^*,\,\,\hat{g}^*\right)\right\}$ is locally asymptotically stable.
\end{enumerate}
\end{theorem}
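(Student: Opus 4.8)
The plan is to prove both claims via Lyapunov's indirect method~\cite[Theorem~3.5]{khalil2002nonlinear}: it suffices to show that the Jacobian $J_{SC}(g^*)$ in \eqref{eq:jacob-scc} is Hurwitz at the equilibrium of interest in each case. By Lemma~\ref{lem:jsc-evals} (applied with $a$ satisfying \eqref{eq:stab-cond}) the spectrum of $J_{SC}(g^*)$ consists of three families: the $2n_F-2$ roots of the quadratic \eqref{eq:jacob-scc-cond1}, which lie in the open left half-plane since $\lambda^2+(a+b+v^2)\lambda+av^2$ has strictly positive coefficients whenever $a,b>0$; the eigenvalue $-v^{*2}<0$ with multiplicity $n_I-1$; and the four roots of the quartic \eqref{eq:jacob-scc-cond2-main}. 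Hence the whole theorem reduces to verifying that this quartic is Hurwitz, i.e.\ that $\alpha_i(g^*)>0$ for $i=0,\dots,3$ together with $\alpha_3\alpha_2-\alpha_1>0$ and $\alpha_1(\alpha_3\alpha_2-\alpha_1)-\alpha_3^2\alpha_0>0$ hold at $g^*$.

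The first step is to fix the signs of the equilibrium quantities $\tilde\lambda(\gn^*)$ and $c(\gn^*)$ entering \eqref{eq:jacob-scc-cond2}. From \eqref{eq:tilde-lambda}, $\tilde\lambda(\gn)=v^2(g_l-\gn)/(g_l+\gn)$, and then \eqref{eq:c(gt)} reads $c(\gn)=\tilde\lambda(\gn)^2+\dpn\,2v^2(\gn-2g_l)/(\gn+g_l)^2$. For the equilibrium in $\mathcal{E}_s\cap\textnormal{cl}(M)$ (case~2, $\pno<\pmax$), Theorem~\ref{thm:char-infl} places it at the low-conductance root $\gn^*<g_l$, so $\tilde\lambda(\gn^*)>0$, while $\dpn(\gn^*)=0$ gives $c(\gn^*)=\tilde\lambda(\gn^*)^2>0$. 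For the equilibrium in $\mathcal{E}_p$ (case~1, $\pno>\pmax$), Theorem~\ref{thm:char-scc} gives $\gn^*=g_l$ and $v^*=E/2$, so $\tilde\lambda(\gn^*)=0$ and $c(\gn^*)=-\dpn(\gn^*)\,v^{*2}/(2g_l)=(\pno-\pmax)\,v^{*2}/(2g_l)>0$. In addition I record $0<2\gi^*/(\gn^*+g_l)<1$: at equilibrium every inflexible load obeys $v^{*2}g_i^*=P_{0,i}$, hence $\gi^*=\sum_{i\in I}P_{0,i}/v^{*2}\le\gn^*$; in case~2, $\gi^*\le\gn^*<g_l\le g_l+\gf^*$ yields $2\gi^*<\gi^*+\gf^*+g_l=\gn^*+g_l$; in case~1, $\gn^*=g_l$ so the claim is $\gi^*<g_l$, equivalently $\sum_{i\in I}P_{0,i}<\pmax$, which holds precisely because $g^*\in\R^n_{\ge0}$ forces the allocation in \eqref{eq:fcc-eqset-2} to be feasible, i.e.\ flexible demand is not depleted.

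With $c(\gn^*)>0$, $\tilde\lambda(\gn^*)\ge0$ and $2\gi^*/(\gn^*+g_l)\in(0,1)$ available, the coefficient signs follow: $\alpha_3=b+a+v^2+\tilde\lambda>0$, $\alpha_0=\bar{\kappa} a c v^2>0$, $\alpha_1=av^2\tilde\lambda+\bar{\kappa} c(a+v^2)>0$, and, rewriting $\alpha_2=\bar{\kappa} c+\tilde\lambda(a+v^2)+v^2\bigl(a+b(1-2\gi/(\gn+g_l))\bigr)$, the last bracket exceeds $v^2a>0$, so $\alpha_2>0$ --- all for every $a,b>0$. What remains, and where ``$a$ sufficiently small'' is used, are the two product conditions. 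Since the equilibria (and thus $v^*,\gn^*,\gi^*,\tilde\lambda(\gn^*),c(\gn^*)$) do not depend on $a$ or $b$, each $\alpha_i(g^*)$ is a polynomial in $a$, and as $a\to0^+$ we get $\alpha_0\to0$, $\alpha_1\to\bar{\kappa} cv^2$, $\alpha_2\to\alpha_2^0$, $\alpha_3\to\alpha_3^0$ with $\alpha_2^0,\alpha_3^0>0$; a direct computation gives $\alpha_3^0\alpha_2^0-\bar{\kappa} cv^2=\bar{\kappa} c(b+\tilde\lambda)+(b+v^2+\tilde\lambda)v^2\bigl(b(1-2\gi/(\gn+g_l))+\tilde\lambda\bigr)>0$, and therefore $\bar{\kappa} cv^2\bigl(\alpha_3^0\alpha_2^0-\bar{\kappa} cv^2\bigr)>0$. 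Both nontrivial Routh--Hurwitz expressions are thus strictly positive in the limit $a\to0^+$; by continuity they stay positive for all $a\in(0,a_0)$ for some $a_0>0$, so choosing $a\in(0,\min\{a_0,v_{\min}^2\})$ makes the quartic Hurwitz, $J_{SC}(g^*)$ Hurwitz, and the equilibrium locally asymptotically stable in both cases.

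I expect the main obstacle to be these two product Routh--Hurwitz inequalities: $\alpha_2$ (and every product built from it) carries the sign-indefinite term $-2bv^2\gi/(\gn+g_l)$, and controlling it genuinely requires the equilibrium relations $v^{*2}g_i^*=P_{0,i}$ together with the location of $\gn^*$ relative to $g_l$ --- in the $\mathcal{E}_p$ case this is exactly the non-depletion condition $\sum_{i\in I}P_{0,i}<\pmax$ that makes stabilization of the collapse point possible. Once the $a\to0^+$ limits are shown strictly positive, the perturbation step is routine, and tracking which of the $2n_F+n_I+1$ eigenvalues belongs to which family is purely combinatorial.
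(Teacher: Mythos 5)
Your proposal is correct and follows essentially the same route as the paper: the same spectral decomposition from Lemma~\ref{lem:jsc-evals}, the same sign analysis of $c(\gn^*)$ and $\tilde{\lambda}(\gn^*)$ at the two equilibria, and the same $a\rightarrow 0^+$ limit plus continuity argument, with your explicit Hurwitz determinant inequalities being algebraically equivalent to the paper's Routh table entries $b_1$, $c_1$, $d_1$ (indeed $b_1=(\alpha_3\alpha_2-\alpha_1)/\alpha_3$ and your limiting value $\bar{\kappa}c^*v^{*2}$ coincides with $\lim_{a\rightarrow 0^+}c_1(g^*)$). Your added observation that $\gi^*\leq\gn^*$ (and the non-depletion reading of $\gi^*<g_l$ in the $\mathcal{E}_p$ case) is a slightly more explicit justification of the bound the paper states as $\gi^*\leq\gn^*=g_l$, but it does not change the argument.
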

\begin{proof} To assess stability of \eqref{eq:scc-dynamics} we will study the 
linearized system. The Jacobian of the system is \eqref{eq:jacob-scc}. By Theorem 
\ref{lem:jsc-evals}, $J_{SC}(g)$ has $n_I-1$ eigenvalues that are $-v^2<0$ and 
$2n_F-2$ eigenvalues that are roots of 
\eqref{eq:jacob-scc-cond1}. When $v^2>0$, by the Routh-Hurwitz criterion, \eqref{eq:jacob-scc-cond1} has two roots $\lambda_1,\lambda_2$ with negative real parts. Therefore, the stability of the system depends on the last four 
eigenvalues that are roots of \eqref{eq:jacob-scc-cond2}. The Routh-Hurwitz table of \eqref{eq:jacob-scc-cond2} is 
\begin{subequations}\label{eq:r-h}
	\begin{align}
		\label{eq:r-h-table}&\begin{array}{r|cccc}	
			s^4	&	1 										& \alpha_2(g) 				&	\alpha_0(g)		&	0 \\
			s^3	&	\alpha_3(g)								& \alpha_1(g) 				& 0					& 	0 \\
			s^2	&	b_1(g)									& b_2(g) 					& 0					& 	0 \\
			s^1	&	c_1(g)									& 0  						& 0					& 	0 \\
			s^0 &	d_1(g)									& 0 						& 0					& 	0
		\end{array} \\
		\label{eq:r-h-b1}&b_1(g)=\frac{\alpha_3(g)\alpha_2(g)-1\cdot\alpha_1(g)}{\alpha_3(g)} 	\\
		\label{eq:r-h-b2}&b_2(g)=\frac{\alpha_3(g)\alpha_0(g)-1\cdot 0}{\alpha_3(g)}=\alpha_0(g) 	\\
		\label{eq:r-h-c1}&c_1(g)=\frac{b_1(g)\alpha_1(g)-\alpha_3(g)b_2\left(g\right)}{b_1(g)} 	\\
		\label{eq:r-h-d1}&d_1(g)=\frac{c_1(g)b_2(g)}{c_1(g)}=b_2(g)=\alpha_0(g) 	
	\end{align}
\end{subequations}
We can immediately check that
\begin{align*}
	\alpha_3(g)=& -\frac{2v^2 \gn}{\gn+ g_l}+ 2v^2 +a
	+b > 0; \\
	d_1(g)
	\overset{\eqref{eq:c(gt)}}{=}&\bar{\kappa} a v^2 \Bigg[\left(\frac{ v^2 }{\gn+g_l}\left(g_l-\gn\right)
	\right)^2+ \\
	&\qquad +\dpn \frac{2v^2}{(\gn+g_l)^2}\left(\gn- 
	2g_l\right)\Bigg]
\end{align*}
Moreover, we can rewrite \eqref{eq:r-h-b1} and \eqref{eq:r-h-c1} respectively as
\begin{equation}\label{eq:r-h-b1-full}
	\begin{split}
		b_1( g)=& \bar{\kappa}c\frac{b+\tilde{\lambda}}{b
		+a+v^{2}+\tilde{\lambda}}+v^{2}\left(a+	b\right)-\\
		&-\frac{2bv^{2}\gi}{\gn+g_l}+\tilde{\lambda}\left(v^{2}+
		\frac{a
		\left(b+a+\tilde{\lambda}\right)}{b
		+a
		+v^{2}+\tilde{\lambda}} \right)	
	\end{split}
\end{equation}
\begin{equation}\label{eq:r-h-c1-full}
	\begin{split}
		c_1( g)=&\bar{\kappa}c
		\frac{\left(a+v^{2}\right)  b_1\left( g \right)-\left(b +a+v^{2}
		\right)
		av^{2}}{b_1\left(g\right)} + \\
		&+\tilde{\lambda} a v^{2}\left(1-
		\frac{\bar{\kappa} c^{2}}{b_1\left(g\right)}
		\right)
	\end{split}
\end{equation}

\begin{enumerate}[leftmargin=*]
	\item When $\pno>\pmax$, then the \textit{load satisfaction} set is empty, i.e., $\mathcal{E}_s=\emptyset$, and the \textit{proportional allocation} set is a singleton, $\mathcal{E}_p=\left\{\left(g^*,\,\,\phi^*,\,\,\hat{g}^*\right)\right\}$. Moreover, 
	$$
	    \eqref{eq:Pmax}\Rightarrow \gn^*=g_l, \quad \eqref{eq:tilde-lambda}\Rightarrow  \tilde{\lambda}^*=\tilde{\lambda}\left(g_l
	\right)=0,   
	$$
    $$
	    \eqref{eq:c(gt)} \Rightarrow	c^*=\left(\pno-\pmax \right)
		\frac{v^{*^2}}{2 g_l} > 0 \overset{\eqref{eq:r-h-d1}}{\Rightarrow} d_1 \left(g^*
		\right)>0.
	$$
	Notice that 
	$$
	\gi^* \leq \gn^*=g_l \Rightarrow v^{*2}\left(a+b\right)-
	\frac{b v^{*2}\gi^*}{g_l}>0.
	$$
	
	Substituting $c^*,\tilde{\lambda}^*$ into \eqref{eq:r-h-b1-full} and
	\eqref{eq:r-h-c1-full}  and 
	assuming $a\rightarrow 0^+$ gives
	\begin{align*}
		&\lim\limits_{a \rightarrow 0^+} b_1( g^*)= \frac{b
		\bar{\kappa} c^*}{b+v^{*2}}+ bv^{*^2}\left(1-\frac{\gi^*}{g_l}\right) 
		>0;\\
		&\lim\limits_{a \rightarrow 0^+} c_1( g^*)=  \bar{\kappa}
		c^*v^{*2} >0.
	\end{align*}
	Hence, all terms in the first column of \eqref{eq:r-h-table} are strictly positive. By the Routh-Hurwitz criterion $\left(g^*,\,\,\phi^*,\,\,\hat{g}^*\right)$ is a locally asymptotically stable equilibrium of \eqref{eq:scc-dynamics}.
	
	\item When $\pno<\pmax$, by Theorem \ref{thm:char-scc} there exist two equilibria $\left(g^*,\,\,\phi^*,\,\,\hat{g}^*\right)$ in the load satisfaction set $\mathcal{E}_s$ such that $\Delta P_{i}(g^*)=0$ for all $i\in N$. 
	We need to show that $\left\{\left(g^*,\,\,\phi^*,\,\,\hat{g}^*\right)\in\mathcal{E}_s:\gn^*<g_l \right\}$ is asymptotically stable. When $\left(g^*,\,\,\phi^*,\,\,\hat{g}^*\right)\in\mathcal{E}_s$ and $\gn^*<g_l$, then by 
	\eqref{eq:c(gt)}
	$$
		c^*= \left(\frac{ v^{* 2} }{\gn + g_l}\left(
		g_l - \gn^* \right) \right)^2 > 0\Rightarrow d_1 \left(g^*
		\right)>0.
	$$
	Notice that 
	\begin{align*}
		&\gi^* \leq \gn^*<g_l \overset{\eqref{eq:tilde-lambda}}{\Rightarrow}  
		\tilde{\lambda}^* >0, \,\, v^{* 2}b-\frac{2b v^{* 2}
		\gi^*}{\gn^*+g_l}>0.
	\end{align*}
	Substituting $c^*,\tilde{\lambda}^*$ into \eqref{eq:r-h-b1-full} and 
	\eqref{eq:r-h-c1-full} and assuming $a\rightarrow 
	0^+$ gives
	\begin{align*}
		&\lim\limits_{a \rightarrow 0^+} b_1( g^*)= \frac{b
		\bar{\kappa}c^*}{b+v^{* 2}}+ bv^{* 2}\left(1-\frac{\gi^*}{g_l}
		\right) >0;\\
		&\lim\limits_{a \rightarrow 0^+} c_1( g^*)=  \bar{\kappa}
		c^*v^{* 2} >0.
	\end{align*}
	Hence, all terms in the first column of \eqref{eq:r-h-table} are strictly positive. By the Routh-Hurwitz criterion $\left(g^*,\,\,\phi^*,\,\,\hat{g}^*\right)$ is a locally asymptotically stable equilibrium of \eqref{eq:scc-dynamics}. 
	\end{enumerate}\end{proof}
\subsection{Bifurcation Analysis} 

    In the proof of Theorem \ref{thm:char-scc} we showed that when $\pno<\pmax$ the system \eqref{eq:scc-dynamics} has two equilibria in the load satisfaction set $\mathcal{E}_s$  and one in the proportional allocation set $\mathcal{E}_p$, thus a total of three equilibria. When $\pno>\pmax$, the set $\mathcal{E}_s$ is empty and the system \eqref{eq:scc-dynamics} has a unique equilibrium in $\mathcal{E}_p$. In this section, we aim to understand the type of bifurcation our VCS controller undergoes at $\pno=\pmax$.
    
    To that end, we need to evaluate the stability of all equilibria of \eqref{eq:scc-dynamics} around $\pno=\pmax$. The load satisfaction equilibrium $\left\{\left(g^*,\,\,\phi^*,\,\,\hat{g}^*\right)\in\mathcal{E}_s:\gn^*>g_l\right\}$ results in lower voltage compared to  $\left\{\left(g^*,\,\,\phi^*,\,\,\hat{g}^*\right)\in\mathcal{E}_s:\gn^*<g_l\right\}$ studied in Section \ref{subsec:vcs-stability-of-efficient-points}. Moreover, when $\pno<\pmax$, the unique proportional allocation equilibrium $\left(g^*,\,\,\phi^*,\,\,\hat{g}^*\right)$ results in inefficient supply of flexible loads:
    \[
        P_i(g^*)=P_{0,i}+\frac{\kappa_i}{\bar{\kappa}}\left(\pmax-\pno\right)>P_{0,i}.
    \]
    Hence, understanding the type of bifurcation at $\pno=\pmax$ requires understanding the stability of the \textit{undesired} equilibria $\left\{\left(g^*,\,\,\phi^*,\,\,\hat{g}^*\right)\in\mathcal{E}_s:\gn^*>g_l\right\}$ and $\left(g^*,\,\,\phi^*,\,\,\hat{g}^*\right)\in\mathcal{E}_p$, when $\pno<\pmax$. The next Lemma shows that parameter $b$ affects the stability of the undesired load satisfaction equilibrium around $\gn^*=g_l$. 
    \begin{lemma}[Stabilization property of $b$ in \eqref{eq:scc-dynamics}]\label{lem:stab-reg}
    Let $b>0$ as in \eqref{eq:scc-dg-f}. Then,
    
        (1) For $b\in\left(0,\frac{E^2}{27}\right)$, there exist $0<m_b\leq M_b<+\infty$ such that all equilibria
    
        $$
        \left\{\left(g^*,\,\,\phi^*,\,\,\hat{g}^*\right)\in\mathcal{E}_s:\gn^*\leq g_l+m_b \right\}$$ 
        are locally asymptotically stable, and all equilibria
        
        $$
        \left\{\left(g^*,\,\,\phi^*,\,\,\hat{g}^*\right)\in\mathcal{E}_s:g_l+m_b\leq \gn^*\leq g_l+M_b\right\}$$ 
        are unstable.
        
        (2) For any $b_1,b_2\in\left(0,\frac{E^2}{27}\right)$, $b_1<b_2$, then
        $$
            m_{b_1}<m_{b_2} \qquad \text{and} \qquad M_{b_2}<M_{b_1}.
        $$
\end{lemma}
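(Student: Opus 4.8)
The plan is to decide local stability on the load--satisfaction manifold from the quartic factor of the Jacobian, and to show that the single surviving Routh--Hurwitz inequality changes sign exactly where $b+\tilde{\lambda}^*$ does, the relevant extremal value of $-\tilde{\lambda}^*$ being $\tfrac{E^2}{27}$.

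First I would appeal to Lemma~\ref{lem:jsc-evals}. With $a$ in the admissible range \eqref{eq:stab-cond}, at any equilibrium $(g^*,\phi^*,\hat g^*)\in\mathcal{E}_s$ the Jacobian $J_{SC}(g^*)$ has $2n_F-2$ eigenvalues that are roots of \eqref{eq:jacob-scc-cond1} and therefore have negative real part (all coefficients positive since $a,b,v^{*2}>0$), and $n_I-1$ eigenvalues equal to $-v^{*2}<0$; so stability is decided by the quartic \eqref{eq:jacob-scc-cond2}. On $\mathcal{E}_s$ we have $\dpn(\gn^*)=0$, which collapses \eqref{eq:c(gt)} to $c^*=(\tilde{\lambda}^*)^2$, with $\tilde{\lambda}^*=\tilde{\lambda}(\gn^*)=-v^{*2}\tfrac{\gn^*-g_l}{\gn^*+g_l}$ and $v^{*2}=\tfrac{(Eg_l)^2}{(\gn^*+g_l)^2}$; thus every coefficient of \eqref{eq:jacob-scc-cond2} is a function of the scalar $\gn^*$ (and of the fixed load data). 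As in the proof of Theorem~\ref{thm:stab} I would then let $a\to 0^+$: $\alpha_0\to 0$, the quartic acquires a root $\approx -a<0$, and the other three roots are those of the cubic $\lambda^3+\alpha_{3,0}\lambda^2+\alpha_{2,0}\lambda+\alpha_{1,0}$ (subscript $0$ meaning $a=0$), for which $\alpha_{3,0}=b+\tfrac{2v^{*2}g_l}{\gn^*+g_l}>0$, $\alpha_{1,0}=\bar{\kappa}c^*v^{*2}>0$, $d_1=\bar{\kappa}ac^*v^{*2}>0$ and $c_1\to\bar{\kappa}c^*v^{*2}>0$ hold for all $\gn^*\neq g_l$. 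Hence the only Routh--Hurwitz condition of \eqref{eq:r-h} that can fail is $b_1>0$. Evaluating \eqref{eq:r-h-b1-full} at $a=0$ with $\dpn=0$, the binding quantity $b_1^{\lim}(\gn^*;b)$ has the sign of $b+\tilde{\lambda}^*(\gn^*)$: when $n_I=0$ it factors cleanly as $(b+\tilde{\lambda}^*)\big(\tfrac{\bar{\kappa}c^*}{b+v^{*2}+\tilde{\lambda}^*}+v^{*2}\big)$ with the second factor positive, and the inflexible--load term $-\tfrac{2bv^{*2}\gi^*}{\gn^*+g_l}$ only perturbs this on the bounded range of $\gn^*$ of interest below.

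Next I would exploit the shape of $-\tilde{\lambda}^*$. By \eqref{eq:tilde-lambda}--\eqref{eq:vg}, $-\tilde{\lambda}^*(\gn^*)=\tfrac{E^2g_l^2(\gn^*-g_l)}{(\gn^*+g_l)^3}$ is continuous on $(g_l,\infty)$, vanishes as $\gn^*\to g_l^+$ and as $\gn^*\to\infty$, is strictly increasing on $(g_l,2g_l)$, strictly decreasing on $(2g_l,\infty)$, and has maximum $\tfrac{E^2}{27}$ at $\gn^*=2g_l$ (elementary). Therefore, for $0<b<\tfrac{E^2}{27}$, the set $\{\gn^*>g_l:\ b+\tilde{\lambda}^*(\gn^*)<0\}=\{\gn^*:\ -\tilde{\lambda}^*(\gn^*)>b\}$ is a bounded open interval $(g_l+m_b,\,g_l+M_b)$ with $g_l<g_l+m_b<2g_l<g_l+M_b<\infty$, whereas for $b\ge\tfrac{E^2}{27}$ it is empty. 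On $(g_l+m_b,\,g_l+M_b)$ we get $b_1^{\lim}<0$, so the cubic has a root with positive real part and the equilibrium of \eqref{eq:scc-dynamics} is unstable; for $g_l<\gn^*\le g_l+m_b$ we get $b+\tilde{\lambda}^*\ge 0$, hence $b_1^{\lim}\ge 0$ and, together with $\alpha_{3,0},c_1,d_1>0$, local asymptotic stability (choosing $a>0$ small to preserve the strict inequality, and checking the single degenerate point $b+\tilde{\lambda}^*=0$ to next order). Together with the $\gn^*<g_l$ equilibrium, already shown stable in Theorem~\ref{thm:stab}, this yields item~(1). For item~(2): $g_l+m_b$ and $g_l+M_b$ are the two preimages of $b$ under $-\tilde{\lambda}^*$, lying in $(g_l,2g_l)$ and $(2g_l,\infty)$ respectively, so by monotonicity of $-\tilde{\lambda}^*$ on each side, increasing $b$ pushes $g_l+m_b$ up and $g_l+M_b$ down; hence $m_{b_1}<m_{b_2}$ and $M_{b_2}<M_{b_1}$ for $b_1<b_2$.

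The main obstacle is the algebra of the first step: verifying that, after setting $\dpn=0$ and taking $a\to 0^+$, the binding Routh--Hurwitz entry $b_1$ is governed sign-wise by $b+\tilde{\lambda}^*$, including a clean bound on the inflexible--load contribution $-2bv^{*2}\gi^*/(\gn^*+g_l)$ so that it neither reverses the sign near $\gn^*=g_l$ nor displaces the first zero crossing (this is also what confines $M_b$ to a finite interval and why it is natural to work in a neighborhood of the bifurcation value $\pno=\pmax$), together with checking that $\alpha_3$, $c_1$ and $d_1$ stay strictly positive along the branch and that reinstating a small $a>0$ preserves the strict sign of $b_1$. Once that is established, the number $E^2/27$ is nothing but $\max_{\gn^*>g_l}\{-\tilde{\lambda}^*(\gn^*)\}$, and items~(1)--(2) are immediate from the unimodality of $-\tilde{\lambda}^*$.
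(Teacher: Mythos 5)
Your proposal follows essentially the same route as the paper's proof: reduce stability to the sign of the Routh--Hurwitz entry $b_1$ in the $a\to 0^+$ limit, observe that on $\mathcal{E}_s$ everything becomes a function of $\gn^*$ with the sign controlled by $b+\tilde{\lambda}^*(\gn^*)$, identify $E^2/27$ as $\max_{\gn^*>g_l}\bigl(-\tilde{\lambda}^*\bigr)$ attained at $\gn^*=2g_l$, define $m_b,M_b$ as the two preimages of $b$ via the intermediate value theorem, and get item (2) from the monotonicity of $\tilde{\lambda}^*$ on each side of $2g_l$. The "main obstacle" you flag --- that the inflexible-load term $-2bv^{*2}\gi^*/(\gn^*+g_l)$ could in principle shift the first zero crossing of $b_1$ away from the zero of $b+\tilde{\lambda}^*$ on the stable side --- is real but is equally present (and equally unaddressed) in the paper's own argument, which only verifies $b_1<0$ on $[g_l+m_b,\,g_l+M_b]$ in detail.
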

    
    We can now characterize the stability of all equilibria close to the bifurcation point $\pno=\pmax$.
    \begin{theorem}[Bifurcation Analysis of \eqref{eq:scc-dynamics}]\label{thm:bifurcation} 
    Consider the system \eqref{eq:scc-dynamics} with  $a>0$ satisfying \eqref{eq:stab-cond}. When $b\in\left(0,\frac{E^2}{27}\right)$ and for $a$ sufficiently small, the system \eqref{eq:scc-dynamics} undergoes a super-critical pitchfork bifurcation at $\pno=\pmax$. 
    \end{theorem}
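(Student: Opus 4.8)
The plan is to characterize the three equilibria of \eqref{eq:scc-dynamics} as functions of the bifurcation parameter $\pno$ near $\pno=\pmax$ and verify the defining signature of a super-critical pitchfork: a branch (here, the desired load-satisfaction equilibrium with $\gn^*<g_l$) that is stable for $\pno<\pmax$ and persists as the stable proportional-allocation equilibrium for $\pno>\pmax$, together with two additional branches (the undesired load-satisfaction equilibrium with $\gn^*>g_l$ and the undesired proportional-allocation equilibrium) that exist only for $\pno<\pmax$, merge with the stable branch at $\pno=\pmax$, and are unstable. The natural state-space picture is to track the three equilibria by their aggregate conductance $\gn^*$: Theorem \ref{thm:char-infl} gives $\gn(g_1^*)<g_l<\gn(g_2^*)$ for the two $\mathcal{E}_s$ equilibria and $\gn^*\to g_l$ as $\pno\to\pmax^-$, while Theorem \ref{thm:char-scc} pins the $\mathcal{E}_p$ equilibrium at exactly $\gn^*=g_l$, with the flexible loads over-supplied by $\tfrac{\kappa_i}{\bar\kappa}(\pmax-\pno)>0$ when $\pno<\pmax$. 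So as $\pno\uparrow\pmax$ all three collapse onto the single point $\gn^*=g_l$, $v^*=E/2$, $\phi^*=0$, $\hat g^*=g^*$, which is the hallmark of a pitchfork (three equilibria merging into one), as opposed to a saddle-node (two merging).

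Next I would establish the stability types on each branch. The desired branch is handled by Theorem \ref{thm:stab}: for $a$ small and $b>0$ it is locally asymptotically stable both for $\pno<\pmax$ (case 2, the $\gn^*<g_l$ element of $\mathcal{E}_s$) and for $\pno>\pmax$ (case 1, the unique $\mathcal{E}_p$ equilibrium). For the undesired load-satisfaction branch $\{\gn^*>g_l\}$, Lemma \ref{lem:stab-reg}(1) does \emph{not} immediately help — it asserts stability for $\gn^*\le g_l+m_b$ and instability for $g_l+m_b\le\gn^*\le g_l+M_b$ — so the subtle point is that, precisely at the bifurcation, the relevant equilibria have $\gn^*$ arbitrarily close to (but larger than) $g_l$, hence in the \emph{stable} window $\gn^*\le g_l+m_b$. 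This looks like it would make the branch stable near the bifurcation, which would contradict a pitchfork. The resolution, and the crux of the argument, must be that Lemma \ref{lem:stab-reg}(2) lets us choose $b$ close enough to $E^2/27$ (equivalently, take $b_2\to (E^2/27)^-$) so that $m_b\to 0$, squeezing the stable window shut; combined with the monotone relation between $\pno$ and $\gn^*(g_2^*)$ from Theorem \ref{thm:char-infl}, for every $b$ in the stated range there is a one-sided neighborhood of $\pno=\pmax$ on which the undesired $\mathcal{E}_s$ equilibrium already has $\gn^*>g_l+m_b$ and is therefore unstable. I would make this precise by inverting the quadratic \eqref{eq:poly-gn}: as $\pno\to\pmax^-$, $\gn(g_2^*)-g_l\sim C\sqrt{\pmax-\pno}$ for an explicit constant $C$, so the threshold $\gn^*=g_l+m_b$ corresponds to $\pno=\pmax-(m_b/C)^2$, and for $\pno$ in between the equilibrium is unstable.

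Then I would deal with the undesired proportional-allocation branch for $\pno<\pmax$. Here $\gn^*=g_l$ exactly, so $\tilde\lambda^*=0$ by \eqref{eq:tilde-lambda}, and from \eqref{eq:c(gt)} one gets $c^*=(\pno-\pmax)\tfrac{v^{*2}}{2g_l}<0$ when $\pno<\pmax$ — the sign flip relative to case 1 of Theorem \ref{thm:stab}. I would feed this negative $c^*$ into the Routh–Hurwitz column \eqref{eq:r-h}: by \eqref{eq:r-h-d1}, $d_1(g^*)=\bar\kappa a v^{*2} c^*<0$ (since the bracket in $d_1$ reduces to $c^*$ when $\gn^*=g_l$), so the first column of the Routh array has a sign change and the equilibrium has an eigenvalue in the open right half-plane — unstable. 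This is a short, robust computation that holds for all $a>0$ small and all $b>0$, so it needs no delicate parameter tuning. Having shown: (i) three equilibria for $\pno<\pmax$, one stable and two unstable, (ii) all three coalescing at $\pno=\pmax$, and (iii) exactly one, stable, equilibrium for $\pno>\pmax$, I would invoke the standard normal-form/center-manifold characterization of bifurcations of equilibria (e.g. \cite{khalil2002nonlinear}) — a one-parameter family exhibiting this exact exchange-and-merge pattern with the surviving branch stable is, by definition, a super-critical pitchfork — to conclude. I expect the main obstacle to be item (ii) done carefully: reconciling Lemma \ref{lem:stab-reg}'s stable window $\gn^*\le g_l+m_b$ with the need for the upper $\mathcal{E}_s$ branch to be unstable right up to the bifurcation, which forces the $m_b\to 0$ shrinking-window argument via Lemma \ref{lem:stab-reg}(2) and a quantitative control of how fast $\gn(g_2^*)$ leaves $g_l$ as $\pno\to\pmax^-$.
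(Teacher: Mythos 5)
Your identification of the three coalescing equilibria and your Routh--Hurwitz computation showing that the $\mathcal{E}_p$ equilibrium is unstable for $\pno<\pmax$ (via $c^*=(\pno-\pmax)\frac{v^{*2}}{2g_l}<0$, hence $d_1(g^*)=\bar{\kappa}ac^*v^{*2}<0$) coincide exactly with the paper's argument. However, your treatment of the undesired load-satisfaction branch $\{\gn^*>g_l\}$ contains a genuine error, rooted in a misreading of what a super-critical pitchfork requires. In the normal form $\dot x=\mu x-x^3$, on the side of the bifurcation where three equilibria coexist, the two \emph{outer} branches are stable and the persistent \emph{middle} branch is unstable. Here the middle branch is the $\mathcal{E}_p$ equilibrium (pinned at $\gn^*=g_l$ for every $\pno$), and the outer branches are the two $\mathcal{E}_s$ equilibria, which exist only for $\pno<\pmax$ and merge into $\gn^*=g_l$ at the bifurcation. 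A super-critical pitchfork therefore requires \emph{both} $\mathcal{E}_s$ equilibria to be stable near the bifurcation, and that is precisely what Lemma \ref{lem:stab-reg}(1) supplies: for $b\in\left(0,\frac{E^2}{27}\right)$ the undesired equilibrium with $g_l<\gn^*\leq g_l+m_b$ is locally asymptotically stable. The paper uses this as a positive ingredient of the proof; you treat it as an obstruction to be argued away.

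The shrinking-window argument you propose as the ``resolution'' cannot work, for two reasons. First, the monotonicity is backwards: Lemma \ref{lem:stab-reg}(2) states $m_{b_1}<m_{b_2}$ for $b_1<b_2$, so $m_b$ \emph{increases} with $b$, and taking $b\to\left(\frac{E^2}{27}\right)^-$ enlarges rather than closes the stable window (Remark \ref{rem:a-b} says this explicitly: smaller $b$ gives smaller $m_b$). Second, and more fundamentally, the theorem fixes $b$ and varies $\pno$; for any fixed admissible $b$, $m_b>0$ is a fixed positive number while $\gn(g_2^*)\to g_l^+$ as $\pno\to\pmax^-$ (indeed $\gn(g_2^*)-g_l=O\!\left(\sqrt{\pmax-\pno}\right)$, as you note), so sufficiently close to the bifurcation the undesired $\mathcal{E}_s$ equilibrium necessarily lies inside the stable window $\gn^*<g_l+m_b$. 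No fixed choice of $b$ makes this branch unstable arbitrarily close to $\pno=\pmax$, so the picture you aim for (one stable and two unstable branches for $\pno<\pmax$) is both unattainable and not the super-critical signature. The correct conclusion, as in the paper and Figure \ref{fig:supercriticalpitchfork}, is: near the bifurcation and for $\pno<\pmax$, two stable $\mathcal{E}_s$ equilibria flank one unstable $\mathcal{E}_p$ equilibrium, and all three coalesce into the single stable $\mathcal{E}_p$ equilibrium for $\pno>\pmax$.
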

    
    \begin{proof} When $\pno<\pmax$, we have shown in the proof of Theorem \ref{thm:char-scc} that the system \eqref{eq:scc-dynamics} has a total of three equilibria.
    When $\pno>\pmax$, the two equilibria in $\mathcal{E}_s$ disappear. Therefore, the system undergoes a pitchfork bifurcation. In Theorem \ref{thm:stab} we show that $\left\{\left(g^*,\,\,\phi^*,\,\,\hat{g}^*\right)\in\mathcal{E}_s:\gn^*\leq g_l\right\}$ when $\pno<\pmax$, and $\left(g^*,\,\,\phi^*,\,\,\hat{g}^*\right)\in\mathcal{E}_p$ when $\pno>\pmax$, are locally asymptotically stable. Moreover, Lemma \ref{lem:stab-reg} states that for $b\in\left(0,\frac{E^2}{27}\right)$, there exists $m_b>0$ such that any \textit{``undesired"} equilibrium $\left\{\left(g^*,\,\,\phi^*,\,\,\hat{g}^*\right)\in\mathcal{E}_s:g_l\leq\gn^*\leq g_l+m_b \right\}$ is locally asymptotically stable.
    Therefore, the system \eqref{eq:scc-dynamics} undergoes a super-critical pitchfork bifurcation if for $\pno<\pmax$, the remaining \textit{``undesired"} equilibrium $\left(g^*,\,\,\phi^*,\,\,\hat{g}^*\right)\in\mathcal{E}_p$ is unstable close to $\pmax$.

    When $\left(g^*,\,\,\phi^*,\,\,\hat{g}^*\right)\in\mathcal{E}_p$, then $\gn^*=g_l$ and by \eqref{eq:c(gt)}
	$$
		c^*=\left(\pno-\pmax\right)\frac{v^{*2}}{2g_l}<0.
	$$
	Then, by \eqref{eq:r-h-d1} $d_1(g^*)=\alpha_0(g^*)=\bar{\kappa}ac^*v^{* 2}
	<0$. Therefore, there exists at least one sign change 
	between the terms in the first column of \eqref{eq:r-h-table}. By the 
	Routh-Hurwitz criterion $g^*$ is an unstable equilibrium and the system \eqref{eq:scc-dynamics} undergoes a super-critical pitchfork bifurcation at $\pno=\pmax$.  
	\end{proof}
	
Theorem \ref{thm:bifurcation} implies that the VCS controller succeeds in transforming the type of bifurcation at $\pno=\pmax$. The transformation of the bifurcation comes at the cost of stabilizing the undesired equilibrium $\left\{\left(g^*,\,\,\phi^*,\,\,\hat{g}^*\right)\in\mathcal{E}_s:\gn^*\geq g_l\right\}$ in some region $\{(g^*,\phi^*,\hat{g}^*)\in\mathcal{E}_s: g_l\leq \gn^*<g_l+m_b\}$. 
\begin{remark}\label{rem:a-b}
	Theorem \ref{thm:stab} shows that parameter $a>0$ in \eqref{eq:scc-dghat} affects the stability of the efficient equilibria, while Theorem \ref{lem:stab-reg} shows that parameter $b>0$ in \eqref{eq:scc-dg-f} affects the stability of the undesired equilibrium $\left\{\left(g^*,\,\,\phi^*,\,\,\hat{g}^*\right)\in\mathcal{E}_s:\gn^*\geq g_l\right\}$. Choosing $a\rightarrow 0^{+}$ stabilizes the efficient equilibria, but there is not a single desired value for $b\in\left(0,\frac{E^2}{27}\right)$. Smaller values of $b\in\left(0,\frac{E^2}{27}\right)$ lead to a smaller $m_b>0$ in Theorem \ref{lem:stab-reg} and thus a smaller region around $\gn^*=g_l$ where the undesired equilibrium is locally asymptotically stable. However, $b=0$ would result in $b_1=0$ in \eqref{eq:r-h-b1}. In that case, the linearization would fail and our analysis would be inconclusive. 
\end{remark}
\begin{figure}
    \begin{center}
        \includegraphics[width=1.0\columnwidth]{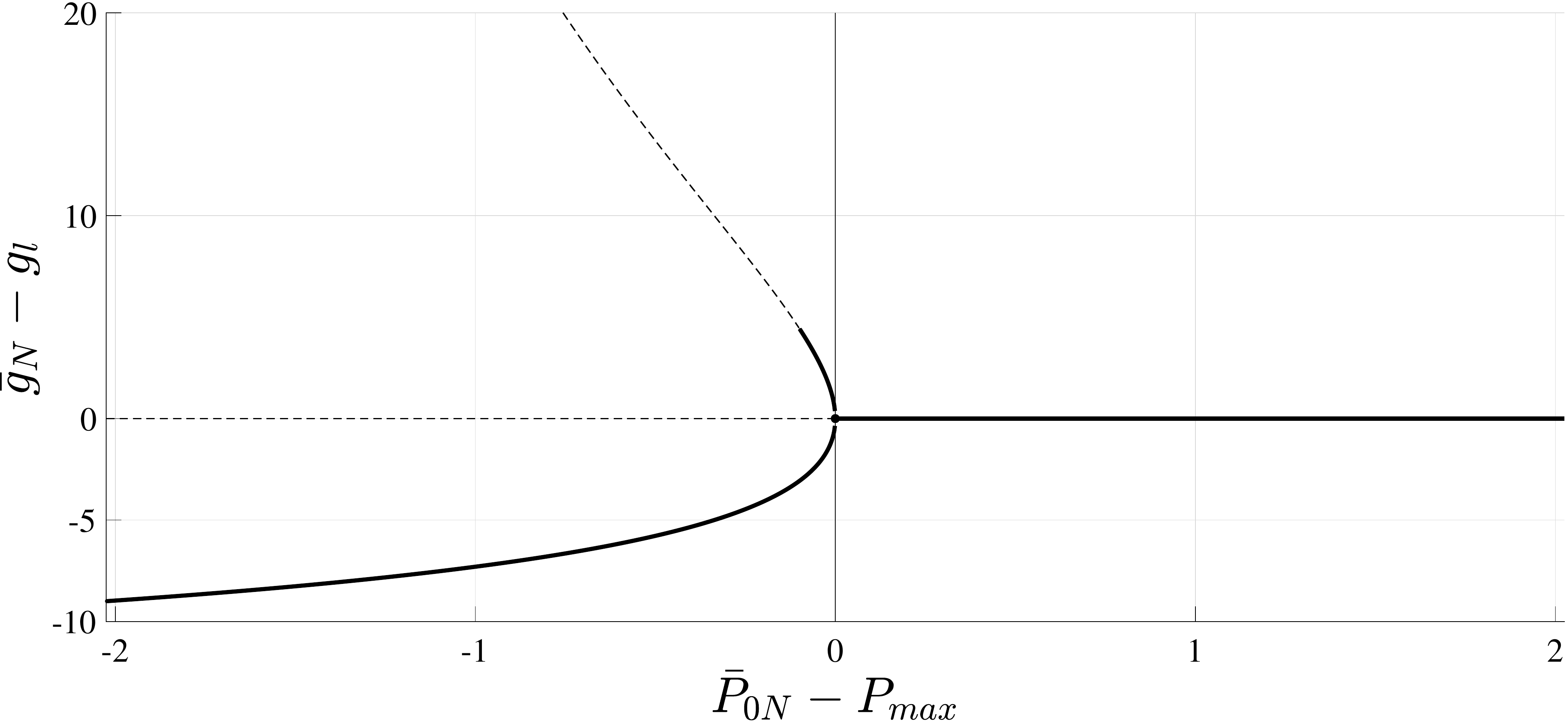}      
    \caption{Bifurcation Diagram of \eqref{eq:scc-dynamics}. As total demand approaches the capacity of the line ($\pno=\pmax$), an unstable equilibrium becomes locally asymptotically stable. At the bifurcation point, the two locally asymptotically stable equilibria and the unstable equilibrium coalesce, resulting in a single locally asymptotically stable equilibrium for $\pno>\pmax$.  }                    
    \label{fig:supercriticalpitchfork}                                                         
    \end{center}                                                            
\end{figure}
\section{Numerical Results}\label{sec:results}
In this section, we validate our theoretical results using numerical 
illustrations. In all the experiments we start the simulations with initial set-points 
such that  $\pno<P_{\text{max}}$ and with conductances close to the 
equilibrium $g^*$ where all demands are met, i.e., $g^*\in \mathcal{E}_s\cap M$. 
We explore the parameter space by  \textit{slowly} varying the demand ($P_0$) 
with time and observing the changes in the equilibria. 

The load-shedding property of the VCS controller is better understood using more than one loads. For that, we will study a system where loads $1$ and $2$ are flexible and load $3$ is inflexible. First, we look at the behavior of the system when the conventional controller \eqref{eq:infl} is applied to all three loads. Figure \ref{fig:power-dec} shows that when total demand exceeds the maximum transferable power through the line, voltage collapses.

\begin{figure}
    \begin{center}
     	\includegraphics[width=1.0\columnwidth]{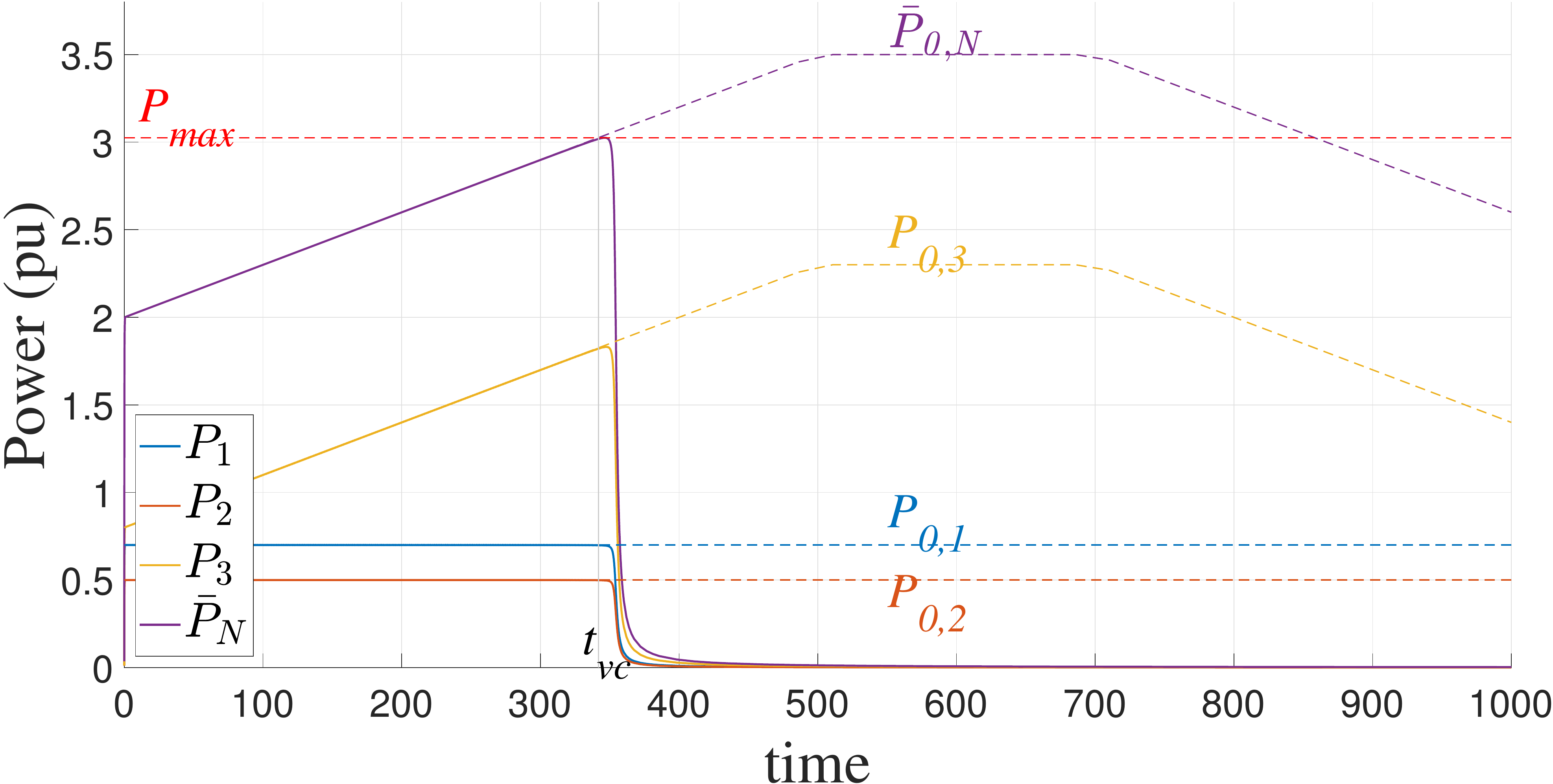}
		\caption{Power supply when all three loads are controlled with the traditional control strategy. When $\pno>\pmax$ voltage collapses.}
	\label{fig:power-dec}                                                    
\end{center}                           
\end{figure}
On the other hand, Figure \ref{fig:power-sec} shows how flexible loads adjust in the overloading regime, when the flexible loads are controlled using the VCS controller. The adjustment is proportional to the parameter $\kappa_i$, consistent with the definition of $\mathcal{E}_p$ in Section \ref{sec:eqlbrm-analysis}. Finally,  Figure \ref{fig:vcs-collapse} illustrates how voltage inevitably collapses when the demand of inflexible loads exceed the capabilities of the line.
\begin{figure}
    \begin{center}
    	\includegraphics[width=1.0\columnwidth]{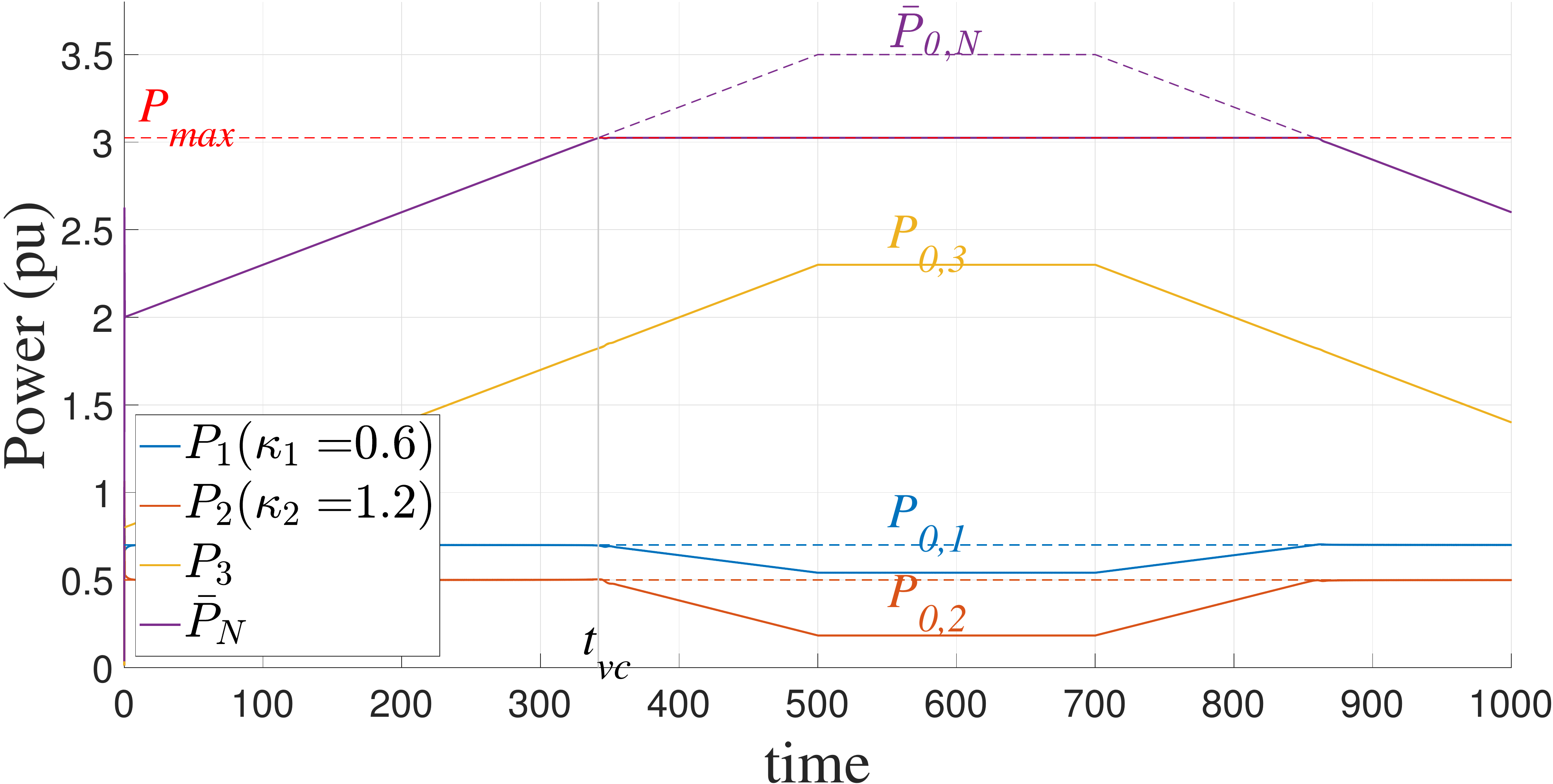}
		\caption{Power supply for $a=0.1$, $b=0.1$ when
		loads $1,2$ are flexible and load $3$ is inflexible. When total demand exceeds the maximum, the adjustment of power supply to the flexible loads is proportional to the parameters $\kappa_i$ $\forall i=1,2$.}
		
	\label{fig:power-sec}                       
    \end{center}                               
\end{figure}
\begin{figure}
    \begin{center}
        \includegraphics[width=1.0\columnwidth]{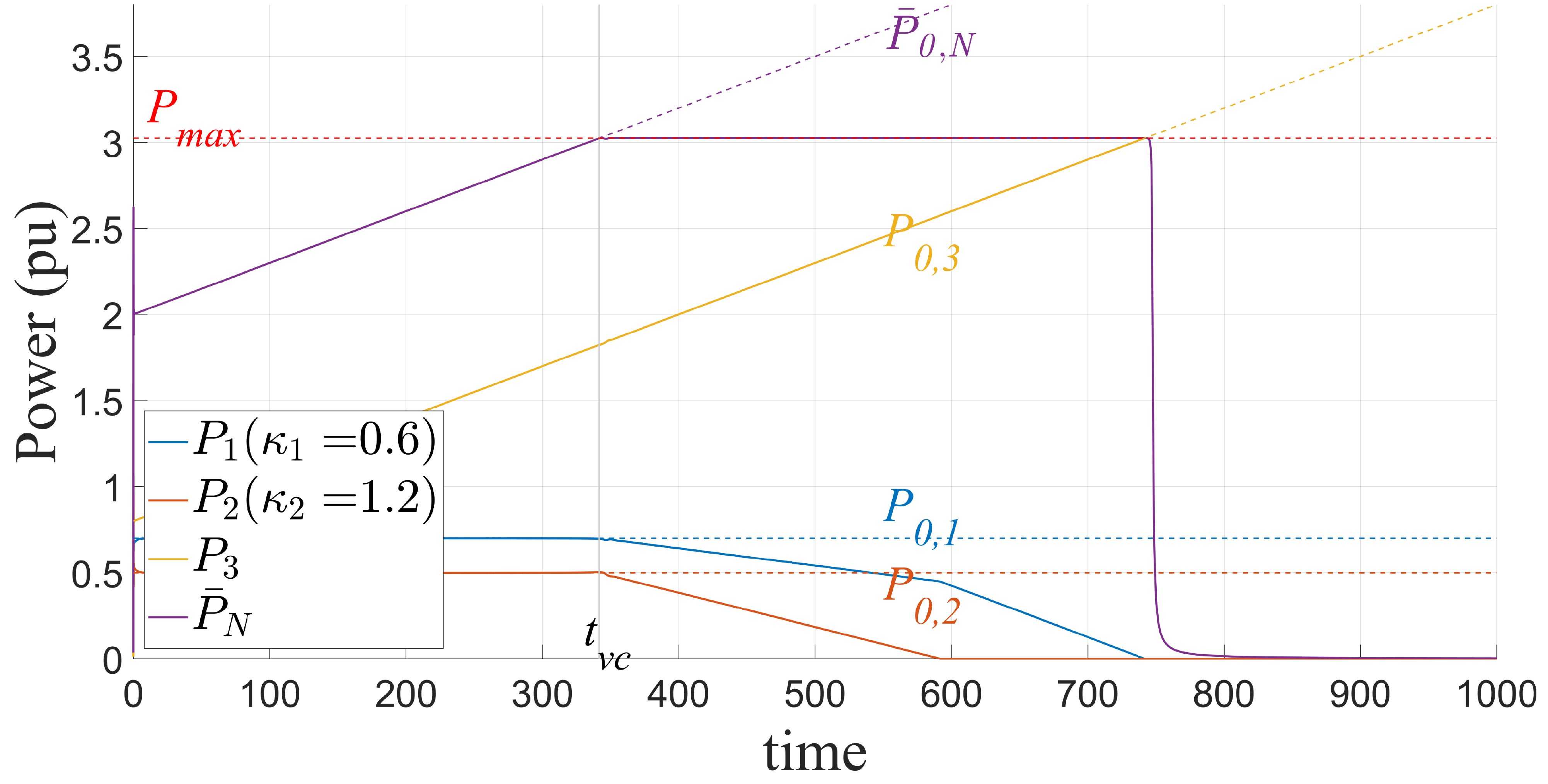}      
    \caption{Power supply for $a=0.1$, $b=0.1$ when 
		loads $1,2$ are flexible and load $3$ is inflexible. Voltage 
		collapses when the demand of the inflexible load exceeds the 
		capabilities of the line.}
	\label{fig:vcs-collapse}                                
    \end{center}       
\end{figure}

Remark \ref{rem:a-b} highlights the impact of parameters $a,b$ on the stability of the equilibria of \eqref{eq:scc-dynamics}. To better understand the impact of $a$ and $b$, we will consider a simple version of the DC network in Figure \ref{fig:nbs} with one flexible and one inflexible load. Figure \ref{fig:stab-powers-sec} verifies that small, non-zero values of both $a$ and $b$ result in stable power supply. On the other hand, when $a$ is bigger and $\pno>\pmax$, the equilibrium that ensures proportional shedding becomes unstable. Moreover, when total demand is smaller than $\pmax$ but very close, larger values of $b$ result in the undesired equilibrium $g^*\in\mathcal{E}_s\cap M^c$ being stable. Finally, Figure \ref{fig:stab-conds-sec} reveals that for small values of $a$ and total demand close to $\pmax$, we always track the desired equilibrium $g^*\in\mathcal{E}_s\cap M$. 
\begin{figure}
    \begin{center}
        \includegraphics[width=1.0\columnwidth]{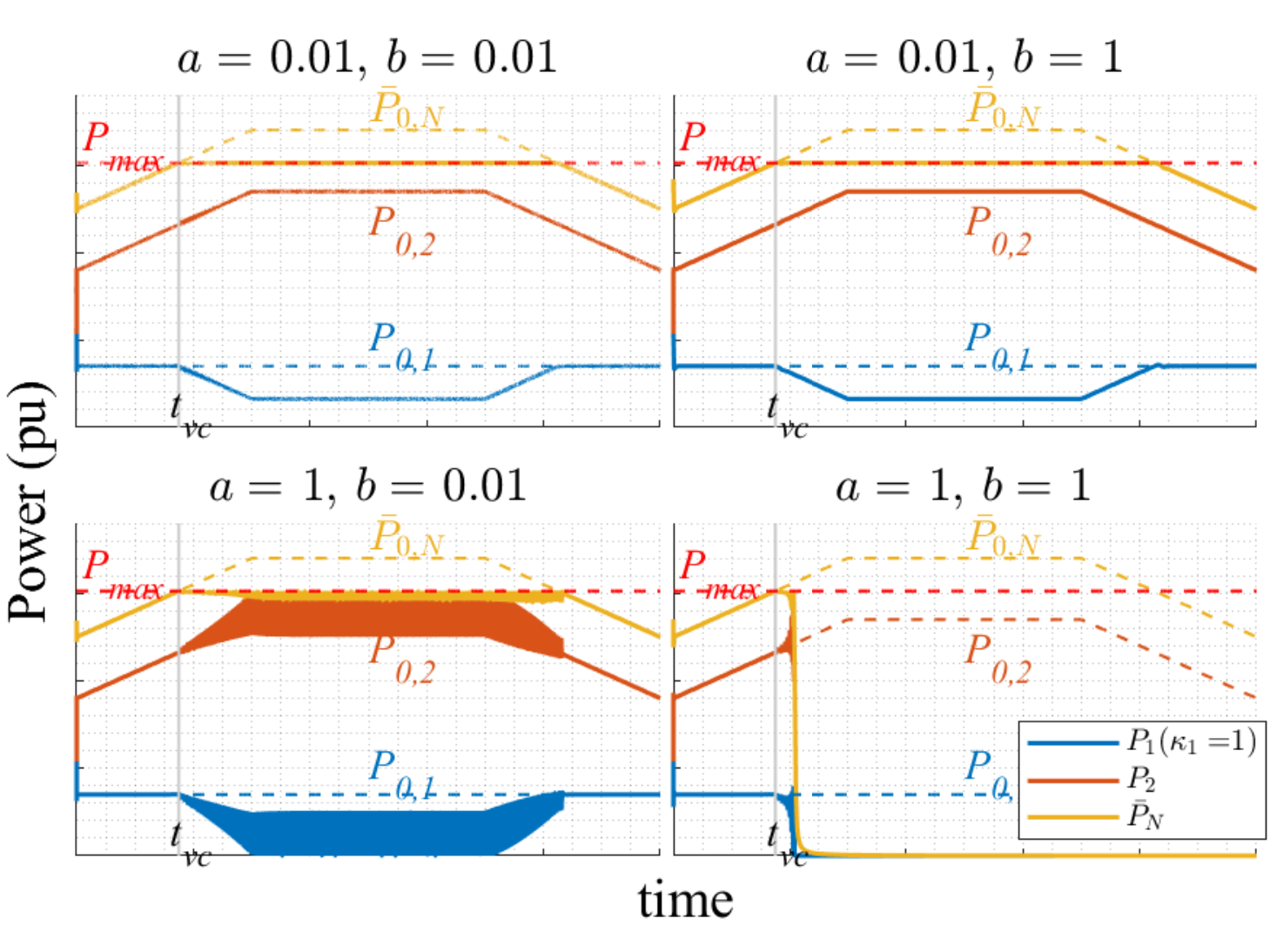}      
 	\caption{Power supply for different values of $a,b$ when load $1$ is flexible and load $2$ is inflexible. For $a$ small --upper two plots-- and $\pno>\pmax$, power supply to load $1$ adjusts accordingly. For higher $a$ --lower two plots-- and $\pno>\pmax$, voltage collapses.   
	}                    
    \label{fig:stab-powers-sec}                                                    
    \end{center}                                                            
\end{figure}
\begin{figure}
    \begin{center}
    \includegraphics[width=1.0\columnwidth]{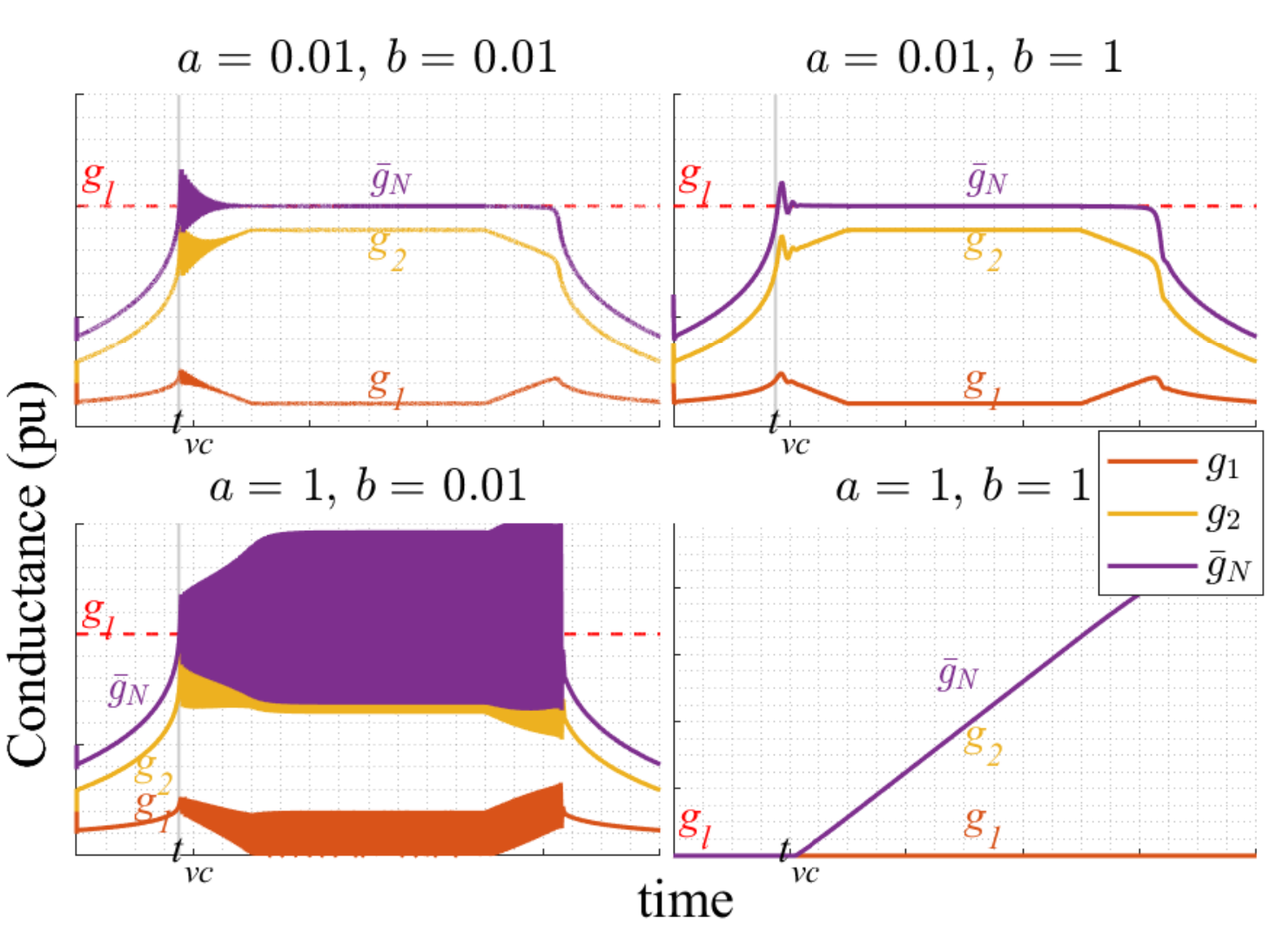}      
    
 	\caption{Response of conductances for different values of $a,b$ when load $1$ is flexible and load $2$ is inflexible. For small $a$, when $\pno<\pmax$, the VCS controller tracks $g^*\in\mathcal{E}_s\cap M$. When $\pno>\pmax$ then $\gn^*=g_l$. For higher values of $a$, conductances diverge when $\pno>\pmax$. }
	\label{fig:stab-conds-sec}                                                    
    \end{center}                                                            
\end{figure}
\section{Conclusions}\label{sec:conclusions}
This work seeks to initiate the study of voltage collapse stabilization as a 
mechanism to allow for more efficient and reliable operation of electric power 
grids. 
We develop a Voltage Collapse 
Stabilization controller that is able to not only prevent voltage collapse, but 
also proportionally distribute the curtailment among all flexible loads in a star DC network. The methodology can be readily applied to a fully reactive alternating current star network, where the controller changes the suscenptance instead of the conductance.   
Further research needs to be conducted to fully characterize the behavior of our 
solution at the bifurcation point. The point where $\pno=P_{\text{max}}$ is a 
non-trivial point, where the Jacobian of the system is identically zero and thus 
requires the treatment of higher order dynamics. We identify two desired 
extensions of this work that are the subject of current research: (a) extending the 
analysis to a general DC network and (b) extending the analysis to a general AC 
network.


\begin{ack}                                 
This work was supported by ARO through contract W911NF-17-1-0092, US DoE EERE award DE-EE0008006, and NSF through grants CNS 1544771, EPCN 1711188, AMPS 1736448, and CAREER 1752362. The authors are also grateful for the contribution of Dr. Fernando Paganini, Aurik Sarker, and Jesse Rines to an earlier version of this work.                  
\end{ack}


\begin{thebibliography}{10}
	
	\bibitem{dobson1994}
	F.~Alvarado, I.~Dobson, and Y.~Hu.
	\newblock Computation of closest bifurcation in power systems.
	\newblock {\em IEEE Transactions on Power Systems}, 9(2):918 -- 928, 1994.
	
	\bibitem{bayoumi2015}
	Ehab H.~E. Bayoumi.
	\newblock Power electronics in smart grid power transmission systems: a review.
	\newblock {\em International Journal of Industrial Electronics and Drives},
	2:98--115, 2015.
	
	\bibitem{bompard1996}
	E.~Bompard, E.~Carpaneto, G.~Chicco, and R.~Napoli.
	\newblock A dynamic interpretation of the load-flow jacobian singularity for
	voltage stability analysis.
	\newblock {\em International Journal of Electrical Power \& Energy Systems},
	18(6):385 -- 395, 1996.
	
	\bibitem{tvc1991}
	T.~Van Cutsem.
	\newblock A method to compute reactive power margins with respect to voltage
	collapse.
	\newblock {\em IEEE Transactions on Power Systems}, 6(1):145--156, Feb 1991.
	
	\bibitem{dobson1992}
	Ian Dobson.
	\newblock Observations on the geometry of saddle node bifurcation and voltage
	collapse in electrical power systems.
	\newblock {\em IEEE Transactions on Circuits and Systems-I: Fundamental Theory
		and Applications}, 39:240--243, 1992.
	
	\bibitem{faysse2005}
	N.~Faysse.
	\newblock Coping with the tragedy of the commons: Game structure and design of
	rules.
	\newblock {\em Journal of Economic Surveys}, 19:239--261, 2005.
	
	\bibitem{iris1997}
	G.~D. Irisarri, X.~Wang, J.~Tong, and S.~Mokhtari.
	\newblock Maximum loadability of power systems using interior point nonlinear
	optimization method.
	\newblock {\em IEEE Transactions on Power Systems}, 12(1):162--172, 1997.
	
	\bibitem{simpson-porco2016}
	Francesco~Bullo John W. Simpson-Porco, Florian~Dorfler.
	\newblock Voltage collapse in complex power grids.
	\newblock {\em Nature Communications}, 2016.
	
	\bibitem{khalil2002nonlinear}
	H.~K. Khalil.
	\newblock {\em Nonlinear Systems}.
	\newblock Prentice Hall, 3rd edition, 2002.
	
	\bibitem{kundur1994}
	P.~Kundur, N.J. Balu, and M.G. Lauby.
	\newblock {\em Power System Stability and Control}.
	\newblock Discussion Paper Series. McGraw-Hill Education, 1994.
	
	\bibitem{cigre2004}
	P.~Kundur, J.~Pasebra, V.~Ajjarapu, G.~Andersson, A.~Bose, C.~Canizares,
	N.~Hatziargyriou, D.~Hill, A.~Stankovic, C.~Taylor, T.~Van Cutsem, and
	V.~Vittal.
	\newblock Definition and classification of power system stability ieee/cigre
	joint task force on stability terms and definitions.
	\newblock {\em IEEE Transactions on Power Systems}, 19:1387 -- 1401, 2004.
	
	\bibitem{obadina1988}
	O.~O. Obadina and G.~J. Berg.
	\newblock Determination of voltage stability limit in multimachine power
	systems.
	\newblock {\em IEEE Transactions on Power Systems}, 3(4):1545--1554, 1988.
	
	\bibitem{rudin1976}
	W.~Rudin.
	\newblock {\em Principles of Mathematical Analysis}.
	\newblock Mc-Graw Hill, 3rd edition, 1976.
	
	\bibitem{Strbac2008}
	G~Strbac.
	\newblock Demand side management: Benefits and challenges.
	\newblock {\em Energy Policy}, 36:1096--1102, 2008.
	
	\bibitem{Strbac1998}
	G~Strbac, S~Ahmed, D~Kirschen, and R~N Allan.
	\newblock A method for computing the value of corrective security.
	\newblock {\em IEEE Transactions on Power Systems}, 13:1096--1102, 1998.
	
	\bibitem{vvc1998}
	T.~Van~Cutsem and C.~Vournas.
	\newblock {\em Voltage Stability of Electric Power Systems}.
	\newblock Kluwer international series in engineering and computer science.
	Springer, 1998.
	
	\bibitem{vournas2016}
	C.~D. Vournas, C.~Lambrou, and M.~Kanatas.
	\newblock Application of local autonomous protection against voltage
	instability to ieee test system.
	\newblock {\em IEEE Transactions on Power Systems}, 31(4):3300--3308, 2016.
	
	\bibitem{vournas2008}
	C.~D. {Vournas} and N.~G. {Sakellaridis}.
	\newblock Problems and solutions for local identification of voltage
	instability and emergency control.
	\newblock In {\em 2008 IEEE Power and Energy Society General Meeting -
		Conversion and Delivery of Electrical Energy in the 21st Century}, pages
	1--7, 2008.
	
	\bibitem{you2020}
	Pengcheng You and Enrique Mallada.
	\newblock Saddle flow dynamics: Observable certificates and separable
	regularization, 2020.
	
\end{thebibliography}



\appendix
\section{Proofs of Lemmas}\label{appdx:proof-of-lemmas}

\paragraph*{Lemma \ref{lem:evals-r1+identity}.}
\begin{proof}
	An eigenvalue-eigenvector pair $\left(\rho,x\right)\in\C\times\C^n$ of $B$ needs to satisfy
	\begin{equation}\label{eq:B-conds}
		B x = \rho x
	\end{equation} 
	Since $w\mathbf{1}_n^T$ is a rank one matrix, it has $n-1$ eigenvalues $\lambda_i\left(w\mathbf{1}_n^T\right)=0$, $i=1,\dots,n-1$, and one non-zero eigenvalue $\lambda_1\left(w\mathbf{1}^T\right)=\mathbf{1}_n^T w$. Moreover, $q\I$ is a scaled identity matrix and thus shifts the eigenvalues of $w \mathbf{1}_n^T$ by $q$, {i.e.,} $\rho_i\left(B\right)=\lambda_i\left(w \mathbf{1}_n^T\right)+q$. Finally, we can check that $\left\{e_1-e_2,e_1-e_3,\dots,e_1-e_n,w\right\}$ spans $\C^n$ and every pair of \eqref{eq:evecs-evals-r1+identity} satisfies \eqref{eq:B-conds}.
\end{proof}

\paragraph*{Lemma \ref{lem:jsc-evals}.}
\begin{proof}
	If $(\lambda,u)\in\C\times\C^{n+1+n_F}$ an eigenvalue-eigenvector pair of $J_{SC}(g)$, then $J_{SC}(g)u=\lambda u$ with $u=\left[\begin{array}{cccc} u_F & u_I & u_{\phi} & \hat{u} \end{array}\right]^T\in\R^{n_F}\times\R^{n_I}\times\R\times\R^{n_F}$, which further leads to
\begin{subequations}\label{eq:jsc-conds}
	\begin{align}
		\label{eq:jsc-conds-a-1}& \frac{2v^2}{\gn+g_l}g_i
		\bar{u}_N  - \left(v^2+\lambda+b\right)u_{F_i} 
		+ b
		\hat{u}_i = \kappa_i u_{\phi},&& \forall i\in F \\
		\label{eq:jsc-conds-a-2}& \frac{2v^2}{\gn+g_l}g_i \bar{u}_N - 
		\left(v^2+\lambda\right) u_{I_i} =  0,&& \forall i\in I\\
		\label{eq:jsc-conds-b}& c\cdot\bar{u}_N=\lambda u_{\phi}& \\
		\label{eq:jsc-conds-c}& au_F-a\hat{u}=\lambda \hat{u}& 
	\end{align}
\end{subequations}
with 
\begin{equation}\label{eq:u-bar-g}
	\bar{u}_{F} =\sum_{i\in F} u_{F_i}, \quad \bar{u}_{I} =\sum_{i\in I} u_{I_i}, \quad \bar{u}_N=\bar{u}_F+\bar{u}_I.
\end{equation}
First we will prove that $-a$ is an 
eigenvalue of $J_{SC}\left(g\right)$ only when 
\begin{equation}\label{eq:degen-a-cond}
	a=v^2-\frac{2v^2}{\gn+g_l}\sum_{i\in I} g_i.
\end{equation}
If $\lambda=-a$, by \eqref{eq:jsc-conds-c} $u_{F}=
\mathbf{0}_{n_F}$ and by \eqref{eq:u-bar-g}, $\bar{u}_N=
\sum_{i\in I} u_i$. Summing \eqref{eq:jsc-conds-a-2} yields
$$
	\left(\frac{2v^2}{\gn+g_l} \sum_{i\in I} g_i - v^2+a\right)\bar{u}_N=0.
$$
If \eqref{eq:degen-a-cond} is true for some $g\in\R^n_{\geq 0}$, then for any 
$\bar{u}_N\neq 0$, by \eqref{eq:jsc-conds-b}, we get 
\begin{align*}
		&\eqref{eq:jsc-conds-b} \Rightarrow u_\phi= \frac{c\cdot\bar{u}_N}
		{\lambda}=-\frac{c\cdot\bar{u}_N}{a};\\
		&\eqref{eq:jsc-conds-a-2} \overset{\eqref{eq:degen-a-cond}}{\Rightarrow} u_i=\frac{g_i 
		 }
		{ \gi }\bar{u}_N, \quad \forall i
		\in I; \\
		&\eqref{eq:jsc-conds-a-1} \Rightarrow \hat{u}_i=-\frac{1}{b}\left( 
		\frac{2v^2g_i}{\gn+g_l}+\frac{\kappa_i c}{a}\right)\bar{u}_N, \, 
			\forall i
		\in F. 
\end{align*}
The respective eigenvector is
$$
	\left[\begin{array}{cccc}
		\mathbf{0}_{n_F} & \frac{1 }
		{\gi}g_I  & -\frac{c}{a} & 
		- 
		\frac{1}{b}\left(\frac{2v^2}{\gn+g_l}g_F+\frac{c}{a}\kappa
	\right)\end{array}\right]^T \bar{u}_N,
$$
for some $\bar{u}_N \neq 0$ and corresponding eigenvalue $-a$. 

If  \eqref{eq:degen-a-cond} does not hold, then $-a$ is an eigenvalue if $\bar{u}_N=0$. By contradiction, substituting $\bar{u}_N = 0$ and $\lambda=-a\neq 0$ into 
\eqref{eq:jsc-conds-b} yields 
$u_{\phi}=0$. Substituting $u_{\phi}=\bar{u}_N=u_{F_i}=0$ for all $i\in N$ into \eqref{eq:jsc-conds-a-1} yields $\hat{u}=\mathbf{0}_{n_F}$. Substituting $\bar{u}_N=0$ into \eqref{eq:jsc-conds-a-2} yields
$$
    \left(v^2 -a\right)u_{I_i}=0\overset{\eqref{eq:stab-cond}}{\Rightarrow} u_{I_i}=0 \quad \forall i\in I.
$$
Therefore, $u=\mathbf{0}_{n+1+n_F}$ and $-a$ cannot be an eigenvalue of $J_{SC}(g)$. 

When $\lambda\neq -a$, then by \eqref{eq:jsc-conds-c}
\begin{equation}\label{eq:jsc-conds-c-new}
	\hat{u}=\frac{a}{\lambda+a}u_F.
\end{equation}
By \eqref{eq:stab-cond}, $g_i>0$ $\forall i\in I$, therefore $\gi=\sum_{i\in I} g_i>0$.  Summing  \eqref{eq:jsc-conds-a-2} over 
all $i\in I$ gives
$$
\frac{2v^2}{\gn+g_l} \gi \bar{u}_F +\left(\frac{2v^2}{\gn+g_l} 
\gi-v^2-\lambda\right)\bar{u}_I=0 
$$	
\begin{equation}\label{eq:bar-uF}
\begin{split}
\overset{\left(\gi>0\right)}{\Rightarrow} \bar{u}_F=&\left(\frac{\gn+g_l}
{2v^2\gi}\left(v^2+\lambda\right)-1\right)\bar{u}_I \\
=&f_{\bar{u}_F}\left(\lambda\right)\bar{u}_I.	
\end{split}	
\end{equation}
Substituting \eqref{eq:jsc-conds-c-new} into \eqref{eq:jsc-conds-a-1} and summing over all $i\in F$ gives
\begin{flalign*}
\left(\frac{2v^2\gf}{\gn+g_l}-v^2 -\lambda-  b +
 \frac{ab}{\lambda+a} \right)\bar{u}_F +\frac{2v^2\gf}
{\gn+g_l}\bar{u}_I 
\end{flalign*}
\begin{equation}\label{eq:bar-uphi}
\begin{split}
\overset{\eqref{eq:bar-uF}}{=}& \left[\left( \frac{2v^2\gf}{\gn+g_l}- v^2 
-\lambda - b + \frac{ab}{\lambda+a} \right) f_{\bar{u}_F}\left(\lambda\right)
 \right. \\
&\quad\left.+\frac{2v^2\gf}{\gn+g_l}\right]\bar{u}_I \\
=& f_{\bar{u}_I}\left(\lambda\right)\bar{u}_I=\bar{\kappa}u_\phi.
\end{split}
\end{equation}
Multiplying \eqref{eq:jsc-conds-b} by $\bar{\kappa}$ yields	
\begin{align*}
	\lambda \bar{\kappa}u_\phi = \bar{\kappa} c \left(\bar{u}_F+ \bar{u}
	_I\right) \overset{\eqref{eq:bar-uF}}{=}  \frac{\bar{\kappa}c
	\left(\gn+g_l\right)}{2v^2\gi}
	\left(v^2+\lambda\right) \bar{u}_I.
\end{align*}
Plugging \eqref{eq:bar-uphi} above yields
\begin{equation}\label{eq:jsc-evals-eq}
	\left[\lambda  f_{\bar{u}_I}\left(\lambda\right) -\bar{\kappa} c \frac{\gn
	+g_l}{2v^2\gi}\left(v^2+\lambda\right) \right] 
	\bar{u}_I = 0.
\end{equation}
Equation \eqref{eq:jsc-evals-eq} implies that either the first term or $\bar{u}_I$ is zero. We will distinguish between the two cases:
\begin{itemize}[leftmargin=*]
	\item Let $\bar{u}_I=0$. Substituting $\bar{u}_I=0$ into \eqref{eq:bar-uF} yields $\bar{u}_F=0$ and thus $\bar{u}_N=\bar{u}_I+\bar{u}_F=0$. By substituting \eqref{eq:jsc-conds-c-new} and $\bar{u}_F=\bar{u}_N=0$ into \eqref{eq:jsc-conds-a-1} and summing over all $i\in F$ we get $u_\phi=0$. 
Moreover, substituting $\bar{u}_I=\bar{u}_F=0$ into \eqref{eq:jsc-conds-a-2} results in
	\begin{equation}\label{eq:j_I-evals-eq}
			\mathbf{0}_{n_I}=-\left(v^2 +\lambda\right)u_I 
	\end{equation}
	Equation \eqref{eq:j_I-evals-eq} can be satisfied either when $u_I=
	\mathbf{0}_{n_I}$ or $\lambda=-v^2$. If $u_{I}=\mathbf{0}_{n_I}$ in \eqref{eq:j_I-evals-eq}, by 
	substituting $u_\phi=0$ and \eqref{eq:jsc-conds-c-new} into 
	\eqref{eq:jsc-conds-a-1} we get
	\begin{align}\label{eq:jsc-pre-cond1}
		&\left(v^2+\lambda+b-\frac{ab}{\lambda+a}\right)u_F=\mathbf{0}_{n_F}. 
	\end{align}
	If $u_F= \mathbf{0}_{n_F}$ then by \eqref{eq:jsc-conds-c} $\hat{u}=\mathbf{0}_{n_F}$ and $u=\mathbf{0}_{n+1+n_F}$ cannot be an eigenvector of \eqref{eq:jacob-scc}. Therefore, $u_F\neq \mathbf{0}_{n_F}$ which means that the first term in \eqref{eq:jsc-pre-cond1} is zero. Multiplying the first term in \eqref{eq:jsc-pre-cond1} by $\lambda+a$ and rearranging the terms results in \eqref{eq:jacob-scc-cond1}.
	Equation \eqref{eq:jacob-scc-cond1} is a second order polynomial with 
	respect to $\lambda$ whose discriminant is
	\begin{align*}
		\Delta =& \left(a+b+v^2\right)^2-4av^2 \\
		 	   =& \left(a-v^2\right)^2 +b^2+2ab+2bv^2>0.
	\end{align*}
	Therefore, it has two distinct solutions $\lambda_1,\lambda_2$.	We can check that there exist $2n_F-2$ eigenvalue-eigenvector pairs of $J_{SC}\left(g\right)$ of the form 
    	\begin{align*}
        	&\left(\lambda_1,\left[\begin{array}{cccc} e_1-e_i & \mathbf{0}_{n_I} & 0 &
        	\frac{a\left(e_1-e_i\right)}{\lambda_1+a} \end{array}\right]^T\right),& i=2,\dots n_F; \\
        	&\left(\lambda_2,\left[\begin{array}{cccc} e_1-e_i & \mathbf{0}_{n_I} & 0 &
        	\frac{a\left(e_1-e_i\right)}{\lambda_2+a} \end{array}\right]^T\right),& i=2,\dots n_F; 
    	\end{align*}
	that satisfy \eqref{eq:jsc-conds}. 	In fact, there exist \textit{exactly} $2n_F-2$ eigenvalues of this form. By \eqref{eq:jsc-conds-a-1}
	$$
	\underbrace{\left(\frac{2v^2}{\gn+g_l}g_F\mathbf{1}_{n_F}^T -  
	v^2\I_{n_F}\!\right)}_{J_F(g)}u_F\!=\!\left( \! b-\frac{ab}{\lambda
	+a}+\lambda\! \right) \!  u_F,
	$$
	{i.e.,} $\left(\left(b-\frac{ab}{\lambda+a}+\lambda\right),u_F\right)$ is an eigenvalue-eigenvector pair of $J_F(g)$. Matrix $J_F(g)$ is a $n_F\times n_F$ RPSI matrix. By Lemma \ref{lem:evals-r1+identity}, there exist exactly $n_F-1$ eigenvalue-eigenvector pairs for each $\lambda_i$, $i\in\{1,2\}$, that satisfy $\bar{u}_F=0$. Therefore, $2\left(n_F-1\right)$ in total. \\
	
	If $\lambda=-v^2$ in \eqref{eq:j_I-evals-eq}, then by substituting $u_\phi=0$ 
	and \eqref{eq:jsc-conds-c-new} into \eqref{eq:jsc-conds-a-1} gives
	$$
	\left(b+\frac{ab}{-v^2 +a}\right)u_F=\mathbf{0}_{n_F} 
	\overset{\left(\lambda\neq-a\right)}{\Rightarrow} u_F=\mathbf{0}_{n_F}.
	$$
	When $u_F=\mathbf{0}_{n_F}$, by \eqref{eq:jsc-conds-c} $\hat{u}=\mathbf{0}
	_{n_F}$. Since $\bar{u}_F=0$, we can rewrite \eqref{eq:j_I-evals-eq} as
	$$
	\underbrace{\left(\frac{2v^2}{\gn+g_l}g_I \mathbf{1}_{n_I}^T-v^2\I_{n_I}
	\right)}_{J_I(g)}u_I=\lambda u_I,
	$$	
	{i.e.,} $\left(\lambda,u_I\right)$ is an eigenvalue-eigenvector pair of $J_I(g)$ that satisfies $\bar{u}_N=0$. Matrix $J_I(g)$ is a $n_I\times n_I$ RPSI matrix. By Lemma \ref{lem:evals-r1+identity} there exist exactly $n_I-1$ eigenvalue-eigenvector pairs that satisfy $\bar{u}_I=0$ and these are
	$$
	\left(-v^2,\left[\begin{array}{cccc} \mathbf{0}_{n_I} & e_{1}-e_i & 0 &
	\mathbf{0}_{n_F} \end{array}\right]\right), \quad  i = 1,\dots,n_I-1.
	$$
	\item If $\bar{u}_I\neq 0$, then by \eqref{eq:jsc-evals-eq}
	\begin{align*}
		&\lambda f_{\bar{u}_I}\left(\lambda\right) - \bar{\kappa}c \frac{\gn
		+g_l}{2v^2\gi}\left(v^2+\lambda\right)=0 \\
		\Rightarrow&- \lambda^3+\lambda^2\left(-\tilde{\lambda}
		\left(\gn\right)-v^2-b\right)+\\
		&+\lambda\left(-v^2\left(\tilde{\lambda}\left( \gn
		\right)+b\right)-\bar{\kappa}c+ \frac{2bv^2\gi}{\gn+
		g_l}
		\right)-\bar{\kappa}cv^2 \\
		&+\frac{ab\lambda}{\lambda+a}\left( v^2+\lambda- 
		\frac{2v^2\gi}{\gn+g_l} \right)=0.
	\end{align*}

	We multiply both sides with $-\left(\lambda+a\right)$ and group terms together to obtain \eqref{eq:jacob-scc-cond2}.
    \end{itemize}
    Finally, the computation of \eqref{eq:jacob-scc-cond2-main} is based on the assumption $\lambda\neq-a$, which we have shown is true for all $g\in\R^n$ such that $a\neq v^2-\frac{2v^2\gn}{\gn+g_l}$. When $a=v^2-\frac{2v^2\gn}{\gn+g_l}$, then 
	$\lambda=-a$ is an eigenvalue of $J_{SC}(g)$. However, if we substitute $\lambda=-a$ into \eqref{eq:jacob-scc-cond1} we find that
	\begin{align*}
	    (-a)^2+(a+b+v^2)(-a)+av^2=ab\neq 0,   
	\end{align*}
	{i.e.,} $\lambda=-a$ is not a root of \eqref{eq:jacob-scc-cond1}. Therefore, when $a=v^2-\frac{2v^2\gn}{\gn+g_l}$, the statement of Lemma \ref{lem:jsc-evals} is consistent only if $\lambda=-a$ is a root of \eqref{eq:jacob-scc-cond2-main}.
	
	\begin{align*}
		&\bar{\kappa} acv^2-a\left( a v^2\tilde{\lambda} +\bar{\kappa} c
		\left(a+v^2\right)
		\right)+ \\
		&+a^2\left(\bar{\kappa} c
		 +v^2\left( a + b  \right) - \frac{2bv^2 \gi}{\gn+g_l} +  
		 \tilde{\lambda}\left(
		a + v^2\right)\right)-			\\
		&- a^3 \left( b + a + v^2 + \tilde{\lambda}\right) + a^4 \\
	   =&a^2b\underbrace{\left(v^2 -\frac{2v^2 
		 \gi}{\gn + g_l} \right)}_{a} -  a^3 b=0,
	\end{align*}
	In the special case where $g\in\R^n$ is such that $a=v^2-\frac{2v^2\gn}{\gn+g_l}$, then the derivation of \eqref{eq:jacob-scc-cond2-main} discards the root $\lambda=-a=-v^2+\frac{2v^2\gn}{\gn+g_l}$. Therefore, the statement of the lemma is consistent for all $g\in\R^n$. 
    \end{proof}

\paragraph*{Lemma \ref{lem:stab-reg}.}
\begin{proof}
    (1) We start by observing that the only term in the first column of the Ruth-Hurwitz table that is affected by $b>0$ is $b_1$.  	Taking $a\rightarrow 0^+$ in \eqref{eq:r-h-b1-full} yields
	\begin{align}
		\lim\limits_{a\rightarrow 0^+} b_1\left(g^*\right)=& \left(b+\tilde{\lambda}^*\right) 
		\left(v^{* 2}+\frac{\bar{\kappa} c^*}{ b + 
		\tilde{\lambda}^*+ v^{* 2}}\right)-\\
		&-\frac{2b v^{* 2}\gi^*}{\gn^*+g_l} \\
		=& f(\gn^*)-\frac{2b v^{* 2}\gi^*}{\gn^*+
		g_l}.\label{eq:r-h-b1-cond}
	\end{align}
 
    We will study the sign of all terms of $b_1$ in \eqref{eq:r-h-b1-cond} for $\left\{\left(g^*,\,\,\phi^*,\,\,\hat{g}^*\right)\in\mathcal{E}_s: \gn^*\geq g_l\right\}$ and for different values of $b>0$.
    
    Notice that $
		\gn^*>g_l \overset{\eqref{eq:tilde-lambda}}{\Rightarrow} 
		\tilde{\lambda}^*< 0$ and
    $$
        \frac{d\tilde{\lambda}^*}{d\gn^*}=\frac{2v^{*^2}}{\left(\gn^*+g_l\right)^2}\left(\gn^*-2g_l\right),
    $$
    which is negative in $[g_l,2g_l)$, positive in $(2g_l,+\infty)$, and zero for $\gn^*=2g_l$. Therefore, $\tilde{\lambda}^*$ is strictly decreasing in $[g_l,2g_l)$, strictly increasing in $(2g_l,+\infty)$, and
    $$\min_{\gn^*\in[g_l,+\infty)}\tilde{\lambda}^*=\tilde{\lambda}(2g_l)=-\frac{E^2}{27}.
    $$
    Let $h(\gn^*)=\lambda(\gn^*)+b$ and $b\in\left(0,\frac{E^2}{27}\right)$ 
    $$
        h(g_l)=\tilde{\lambda}(g_l)+b=b>0, \quad  h(2g_l)=\tilde{\lambda}(2g_l)+b<0.
    $$
    Since $h(\gn^*)$ is continuous on $\gn^*$, by the IVT there exists $\xi_b\in(g_l,2g_l)$ such that $h(\xi_b)=0$, or, equivalently, $h(g_l+m_b)=0$ for $m_b=\xi_b-g_l$. Since $h$ is strictly decreasing on $(g_l,2g_l)$, $\xi_b$ is the unique intersection of $h$ with the zero axis and $h(\gn^*)>0$ for $\gn^*\in(g_l, g_l+m_b)$.
    
    Similarly, for $\gn^*\in(2g_l,+\infty)$, 
    $$
        h(2g_l)=\tilde{\lambda}(2g_l)+b<0,\quad \lim_{\gn^*\rightarrow+\infty}\tilde{\lambda}(\gn^*)+b=b>0.
    $$
    Since $h(\gn^*)$ is continuous, strictly increasing in $(2g_l,+\infty)$, there exists unique $\Xi_b\in(2g_l,+\infty)$ that $h(\Xi_b)=0$, or, equivalently, $h(g_l+M_b)=0$ for $M_b=\Xi_b-g_l$. When $b\in\left(0,\frac{E^2}{27}\right)$, combining the two results leads to
    \begin{equation}\label{eq:sign-tilde-lambda+b}
        h(\gn^*)=\lambda(\gn^*)+b:\begin{cases}
          \geq 0, &\quad \forall \gn^*\in [g_l,g_l+m_b) \\
          \leq 0, &\quad \forall \gn^*\in \left[g_l+m_b,g_l+M_b\right) \\
          \geq 0, &\quad \forall \gn^*\in \left[g_l+M_b,+\infty\right)
        \end{cases}
    \end{equation}
    Moreover,
    \begin{align*}
        \eqref{eq:c(gt)}\Rightarrow& c^*=\left(\frac{ v^{* 2}}{\gn+
		g_l}\left(g_l-\gn^*\right)\right)^2 > 0, \\
		\eqref{eq:tilde-lambda}\Rightarrow& \tilde{\lambda}^*+v^{*^2}+b=2v^{*^2}\left(1-\frac{\gn^*}{\gn^*+g_l}\right)+b >0.
	\end{align*}
	By substituting the above into \eqref{eq:r-h-b1-cond} leads to $f(\gn^*)<0$ for $g_l+m_b\leq \gn^* \leq g_l+M_b$ and subsequently $b_1(g^*)<0$. By the Ruth-Hurwitz criterion, when $b\in\left(0,\frac{E^2}{27}\right)$, then $\left\{\left(g^*,\,\,\phi^*,\,\,\hat{g}^*\right)\in\mathcal{E}_s:g_l+m_b\leq \gn^*\leq g_l+M_b\right\}$ is unstable.
	
	(2) Let $b_1,b_2\in\left(0,\frac{E^2}{27}\right)$, $b_1<b_2$. By applying the IVT on $(g_l,2g_l)$ we have shown that there exist unique $0<m_{b_1},m_{b_2}\leq g_l$ such that
	\begin{align*}
        &\tilde{\lambda}(g_l+m_{b_1})+b_1=\tilde{\lambda}(g_l+m_{b_2})+b_2=0 \\
        \Rightarrow& \tilde{\lambda}(g_l+m_{b_1}) - \tilde{\lambda}(g_l+m_{b_2}) = b_2 - b_1 >0 \\ 
        \Rightarrow& \tilde{\lambda}(g_l+m_{b_1}) > \tilde{\lambda}(g_l+m_{b_2}). 
	\end{align*}
	Since $\tilde{\lambda}$ is strictly decreasing in $(g_l,2g_l)$, then 
	\begin{equation}\label{eq:mb1-mb2}
	    g_l+m_{b_1} < g_l+m_{b_2} \Rightarrow m_{b_1} < m_{b_2}.
	\end{equation}
    Moreover, by applying the IVT on $(2g_l,+\infty)$ we have shown that there exist $g_l<M_{b_1},M_{b_2}<+\infty$ such that
    \begin{align*}
        &\tilde{\lambda}(g_l+M_{b_1})+b_1=\tilde{\lambda}(g_l+M_{b_2})+b_2=0 \\
        \Rightarrow& \tilde{\lambda}(g_l+M_{b_1})-\tilde{\lambda}(g_l+M_{b_2})=b_2 - b_1 > 0 \\
        \Rightarrow& \tilde{\lambda}(g_l+M_{b_1}) > \tilde{\lambda}(g_l+M_{b_2})
	\end{align*}
	Since $\tilde{\lambda}$ is strictly increasing in $(2g_l,+\infty)$, then
	\begin{equation}\label{eq:Mb1-Mb2}
	    g_l+M_{b_1} > g_l+M_{b_2} \Rightarrow M_{b_1} > M_{b_2}.
	\end{equation}\end{proof}

\end{document}